\documentclass{article}

\usepackage{iclr2027_conference,times}

\usepackage[T1]{fontenc}
\usepackage[utf8]{inputenc}
\usepackage{microtype}
\usepackage{needspace}
\usepackage{amsmath,amssymb,amsfonts,amsthm,mathtools}
\usepackage{aliascnt}
\usepackage{bm}
\allowdisplaybreaks
\usepackage{graphicx}
\usepackage{array,booktabs}
\usepackage{enumitem}
\usepackage{placeins}
\usepackage{flafter} 
\usepackage{etoolbox}
\usepackage[ruled,vlined]{algorithm2e}
\SetAlFnt{\small}
\SetAlCapFnt{\small}
\SetAlCapNameFnt{\small}
\SetAlCapHSkip{0pt}
\IncMargin{-\parindent}

\usepackage{xcolor}
\usepackage{pifont}
\newcommand{\singleloopYes}{\textcolor[rgb]{0,0.8,0}{\ding{51}}}
\newcommand{\singleloopNo}{\textcolor[rgb]{1,0,0}{\ding{55}}}
\usepackage{hyperref}
\usepackage{url}
\usepackage[capitalize,noabbrev,nameinlink]{cleveref}
\usepackage{etoc}
\hypersetup{hidelinks,pdftitle={A Horizon-Independent Regret Bound for Optimistic Hedge in General-Sum Games},pdfauthor={Junsoo Ha}}
\theoremstyle{plain}
\newtheorem{theorem}{Theorem}[section]

\newaliascnt{lemma}{theorem}
\newtheorem{lemma}[lemma]{Lemma}
\aliascntresetthe{lemma}

\newaliascnt{proposition}{theorem}

\aliascntresetthe{proposition}

\newaliascnt{corollary}{theorem}

\aliascntresetthe{corollary}

\theoremstyle{definition}
\newaliascnt{definition}{theorem}
\newtheorem{definition}[definition]{Definition}
\aliascntresetthe{definition}

\newaliascnt{assumption}{theorem}

\aliascntresetthe{assumption}

\newaliascnt{example}{theorem}

\aliascntresetthe{example}

\theoremstyle{remark}
\newtheorem{remark}{Remark}[section]

\theoremstyle{plain}
\usepackage{thm-restate}
\newcommand{\printrepeatedstatement}[2]{%
  \begin{NoHyper}#2*\end{NoHyper}%
}

\crefname{theorem}{theorem}{theorems}
\Crefname{theorem}{Theorem}{Theorems}
\crefname{lemma}{lemma}{lemmas}
\Crefname{lemma}{Lemma}{Lemmas}
\crefname{proposition}{proposition}{propositions}
\Crefname{proposition}{Proposition}{Propositions}
\crefname{corollary}{corollary}{corollaries}
\Crefname{corollary}{Corollary}{Corollaries}
\crefname{remark}{remark}{remarks}
\Crefname{remark}{Remark}{Remarks}
\crefname{definition}{definition}{definitions}
\Crefname{definition}{Definition}{Definitions}
\crefname{equation}{equation}{equations}
\Crefname{equation}{Equation}{Equations}
\crefname{section}{section}{sections}
\Crefname{section}{Section}{Sections}

\newcommand{\R}{\mathbb{R}}
\newcommand{\E}{\mathbb{E}}
\newcommand{\1}{\mathbf{1}}
\newcommand{\Reg}{\operatorname{Reg}}
\newcommand{\Var}{\operatorname{Var}}
\newcommand{\osc}{\operatorname{osc}}
\newcommand{\diag}{\operatorname{Diag}}
\newcommand{\ip}[2]{\left\langle #1,#2\right\rangle}
\newcommand{\norm}[1]{\left\lVert #1\right\rVert}

\newcommand{\dd}{\mathrm{d}}
\newcommand{\appref}[1]{\hyperref[#1]{Appendix~\ref*{#1}}}
\newcolumntype{L}[1]{>{\raggedright\arraybackslash}p{#1}}

\title{A Horizon-Independent Regret Bound for\\Optimistic Hedge in General-Sum Games}

\author{Junsoo Ha \\
\texttt{junsoo.ha.contact@gmail.com}}

\iclrfinalcopy
\makeatletter
\patchcmd{\@maketitle}
  {\lhead{Published as a conference paper at ICLR 2027}}
  {\lhead{}}
  {}{\PackageError{manuscript}{Could not remove publication-status header}{Check the title macro in the supplied style.}}
\makeatother

\begin{document}
\etocdepthtag.toc{main}

\maketitle

\begin{abstract}
\looseness=-1
Can simple learning rules keep their regret bounded in self-play? Recent work achieves constant regret bounds through modified regularization and higher-order prediction \citep{LiuFarinaOzdaglar2026,AbbadiLarakiMertikopoulos2026}. Yet for Optimistic Hedge, arguably the most canonical method in games, the best known individual regret bound remains logarithmic \citep{Ha2026LogHedge}. In this work, we prove that plain Optimistic Hedge with a constant step size can attain $O_{n,d}(1)$ individual regret in general-sum games with $n$ players and $d=(d_1,\ldots,d_n)$ actions, under expected loss-vector feedback. As a corollary, its time-averaged play enjoys an $O_{n,d}(1/T)$ coarse correlated equilibrium (CCE) gap. Our analysis represents Optimistic Hedge as a real-analytic recurrence on a compact space, which yields an exact finite-order difference relation that eliminates horizon dependence. Our proof hinges on nonconstructive Noetherianity argument of \cite{Frisch1967}, so the $(n,d)$-dependence remains implicit.
\end{abstract}

\section{Introduction}
\label{sec:introduction}

A central question in game theory \citep{VonNeumannMorgenstern1944,Nash1950} is whether independent learning can lead to an equilibrium when opponents are adapting. Such interdependent learning dynamics underpin modern applications such as poker self-play \citep{BrownSandholm2019}, adversarial training \citep{GoodfellowEtAl2014}, and game-theoretic approaches to language-model alignment \citep{MunosEtAl2024,RossetEtAl2024,WuEtAl2024SPPO,GolzEtAl2025,WuEtAl2026MNPO}.

No-regret learning \citep{CBL2006} offers a decentralized approach: players seek low regret against the best fixed action in hindsight \citep{Hannan1957}. Indeed, sublinear external regret drives time-averaged play toward an equilibrium \citep{MoulinVial1978,HartMasColell2001}.

While classical online learning guarantees $O(\sqrt T)$ regret against arbitrary losses \citep{FreundSchapire1997}, self-play adds an additional structure: a common update rule shared by all players. Optimistic methods exploit this structure and use recent losses to predict the next one and improve regret bounds \citep{RakhlinSridharan2013,SALS2015}. Optimistic Hedge, the most canonical optimistic method for games, has recently been shown to enjoy polylogarithmic \citep{DFG2021} and logarithmic \citep{Ha2026LogHedge} regret bounds. On the other hand, recent alternatives have shown that modified regularization and higher-order prediction methods can produce horizon-independent regret bounds \citep{LiuFarinaOzdaglar2026,AbbadiLarakiMertikopoulos2026}. In this paper, we ask the following natural question:
\begin{center}
\itshape
Can Optimistic Hedge attain horizon-independent individual regret in general-sum games?
\end{center}

\paragraph{Contributions.}
Our main result gives an affirmative answer. Our contributions are two-fold.
\begin{enumerate}[label=(\roman*),leftmargin=1.7em,itemsep=0pt,topsep=0pt,parsep=0pt]
\item \looseness=-1 \textbf{An analytic finite-order difference relation.} Based on the high-order smoothness framework \citep{DFG2021} and the weighted-gap reduction \citep{Ha2026LogHedge}, we represent Optimistic Hedge as an analytic recurrence on a compact domain. A Noetherianity argument then yields an exact finite-order difference relation controlling the first-order weighted-gap differences.
\item \textbf{A horizon-independent regret bound.} Based on the finite-order difference relation and a simple finite-difference interpolation argument, we obtain $O_{n,d}(1)$ individual regret and an $O_{n,d}(1/T)$ CCE gap. The dimension dependence remains nonconstructive.
\end{enumerate}

Section~\ref{sec:related} reviews prior work; Sections~\ref{sec:setup}--\ref{sec:main-results} introduce the setting and main result. Section~\ref{sec:analysis} explains the proof strategy, with detailed proofs in the appendices.

\section{Related work}
\label{sec:related}

\textbf{No-regret learning and equilibrium.}
The link between no-regret learning and equilibrium computation dates back to \cite{Blackwell1956} and \cite{Hannan1957}. External regret compares cumulative loss with the best fixed action in hindsight. When all players achieve sublinear regret, averaging yields approximate Nash equilibria in two-player zero-sum games \citep{FreundSchapire1999} and CCE in general-sum games \citep{MoulinVial1978,HartMasColell2001}. Controlling recommendation-dependent deviations via internal or swap regret yields correlated equilibrium \citep{HartMasColell2000,BlumMansour2007}.

\textbf{Individual regret in self-play.}
Improving upon standard no-regret learning algorithms, such as follow-the-regularized-leader (FTRL), optimistic algorithms exploit predictable loss sequences \citep{ChiangEtAl2012,RakhlinSridharan2013}. Notably, the seminal work of \cite{SALS2015} showed that optimistic follow-the-regularized-leader (OFTRL) and optimistic mirror descent (OMD) can exploit the self-play-induced predictability in games to achieve $O(\sqrt n\log d_{\max}T^{1/4})$ individual regret, where \mbox{$d_{\max}:=\max_i d_i$} is the maximum number of actions across players. For Optimistic Hedge, \cite{ChenPeng2020} obtained $O(\log^{5/6}d_{\max}T^{1/6})$ in two-player games, and \cite{DFG2021} established $O(n\log d_i\log^4T)$ in $n$-player games based on a high-order smoothness analysis.

A recent line of work improves regret bounds by modifying regularization, step sizes, or prediction methods. Log-regularized lifted OFTRL (LRL-OFTRL) \citep{FarinaEtAl2022} enlarges the regularized strategy space and achieves $O(nd_{\max}\log T)$ in finite general-sum games, and extends to general convex strategy sets. Cautious Optimistic MWU (COMWU) \citep{SoleymaniEtAl2025,SoleymaniEtAl2025Fast} achieves $O(n\log^2 d_{\max}\log T)$ through adaptive step sizes. ECHO-OFTRL \citep{LiuFarinaOzdaglar2026} and HOOD \citep{AbbadiLarakiMertikopoulos2026} combine higher-order prediction with lifted regularization to obtain $O(n^{21}\log^4 d_{\max})$ and $O(n^3\log^2 d_{\max})$ regret bounds, respectively. \Cref{tab:positioning} compares these individual regret bounds.

\textbf{Other regret notions and update rules.}
A parallel line studies swap regret to obtain guarantees for correlated equilibrium (CE) \citep{Aumann1974}. \cite{AnagnostidesEtAl2022CE} extend the high-order smoothness framework of \cite{DFG2021} to obtain polylogarithmic internal and swap regret. Log-barrier regularization \citep{AnagnostidesEtAl2022LogSwap} and hybrid regularization \citep{Tsuchiya2026Hybrid} yield logarithmic and sublogarithmic swap regret, respectively. Clairvoyant Multiplicative Weights Update (CMWU) \citep{PiliourasEtAl2022} obtains horizon-independent regret via implicit updates; however, its uncoupled implementation guarantees this bound only on a selected subsequence. For two-player general-sum games, \cite{TaoZheng2026} achieve $O(\log d_{\max})$ alternating regret and $O(\log d_{\max}/T)$ CCE error using sequential updates, unlike our simultaneous external-regret setting.

\textbf{Closest prior work.}
The closest works are \cite{DFG2021} and \cite{Ha2026LogHedge}. We build on the high-order smoothness framework of \cite{DFG2021} and the recurrence-based weighted-gap analysis of \cite{Ha2026LogHedge}. In particular, \cite{Ha2026LogHedge} represents Optimistic Hedge as a recurrent map acting on a state space equipped with a fixed Euclidean norm. Crucially, the high-order estimates of \cite{Ha2026LogHedge} leave a remainder that grows with the horizon at a fixed step size. In this work, we instead obtain an exact relation among difference orders and eliminate the horizon dependence in our final regret bound.

\begin{table}[!t]
\centering
\setlength{\tabcolsep}{3pt}
\renewcommand{\arraystretch}{1.05}
\caption{Individual external regret in full-information general-sum normal-form games; $d_{\max}:=\max_i d_i$. The Chen--Peng bound assumes two players; LRL-OFTRL also applies to convex games. ``Single loop'' means closed-form updates without inner optimization or root finding, excluding feedback cost. The OFTRL/OMD check refers to its entropy-regularized Optimistic Hedge instantiation. The notation $O_{n,d}(1)$ hides dependence on the numbers of players and actions, but not on $T$.}
\label{tab:positioning}
\footnotesize
\begin{tabular}{@{}L{0.25\linewidth}L{0.245\linewidth}>{\centering\arraybackslash}p{0.115\linewidth}L{0.335\linewidth}@{}}
\toprule
Reference & Algorithm & Single loop & Regret bound \\
\midrule
\citet{SALS2015} & OFTRL/OMD & \singleloopYes & $O(\sqrt n\log d_{\max}T^{1/4})$ \\
\citet{ChenPeng2020} & Optimistic Hedge & \singleloopYes & $O(\log^{5/6}d_{\max}T^{1/6})$ \\
\cite{DFG2021} & Optimistic Hedge & \singleloopYes & $O(n\log d_i\log^4T)$ \\
\citet{FarinaEtAl2022} & LRL-OFTRL & \singleloopNo & $O(nd_{\max}\log T)$ \\
\citet{SoleymaniEtAl2025,SoleymaniEtAl2025Fast} & COMWU & \singleloopNo & $O(n\log^2 d_{\max}\log T)$ \\
\citet{LiuFarinaOzdaglar2026} & ECHO-OFTRL & \singleloopNo & $O(n^{21}\log^4 d_{\max})$ \\
\citet{AbbadiLarakiMertikopoulos2026} & HOOD & \singleloopNo & $O(n^3\log^2 d_{\max})$ \\
\cite{Ha2026LogHedge} & Optimistic Hedge & \singleloopYes & $O(\sqrt n\log d_i\log T)$ \\
This work & Optimistic Hedge & \singleloopYes & $O_{n,d}(1)$ \\
\bottomrule
\end{tabular}
\end{table}

\section{Preliminaries}
\label{sec:setup}

We study full-information repeated play in finite general-sum games within the classical no-regret learning framework \citep{Hannan1957,CBL2006}.

\paragraph{Notation.}
For an integer \(q\ge1\), write \([q]:=\{1,\ldots,q\}\) and define a probability simplex \(\Delta_q:=\{x\in\R_+^q:\sum_ax_a=1\}\). Fix \(n\ge1\) players with action counts \(d_i\ge1\), action sets \(A_i=[d_i]\), and mixed strategies \(x_i\in\Delta_{d_i}\). Set action dimensions \(d:=(d_1,\ldots,d_n)\) and \(d_{\max}:=\max_i d_i\). The notation \(O_{n,d}(\cdot)\) hides dependence on \(n,d\), but not on the horizon; all logarithms are natural.

\paragraph{Loss functions.}
We denote a loss table by \(\ell:\prod_{j=1}^nA_j\to[0,1]^n\) and define player \(i\)'s expected loss \(\ell_i:\prod_{j=1}^n\Delta_{d_j}\to[0,1]\) as \(\ell_i(x):=\E_{a\sim\bigotimes_jx_j}[\ell(a)]_i\). For a strategy profile \(x=(x_1,\ldots,x_n)\), write \(x_{-i}:=(x_j)_{j\ne i}\). We define the expected loss vector \(\ell_i(x_{-i})\in[0,1]^{d_i}\) by
\begin{equation}
[\ell_i(x_{-i})]_{a_i}
:=\sum\nolimits_{a_{-i}\in\prod_{j\ne i}A_j}
[\ell(a_i,a_{-i})]_i\prod\nolimits_{j\ne i}x_{j,a_j}.
\label{eq:loss-vector}
\end{equation}
Hence, the identity \(\ell_i(x)=\ip{x_i}{\ell_i(x_{-i})}\) holds. We also write \(\ell(x):=(\ell_i(x_{-i}))_{i=1}^n\) for the stacked expected-loss profile; the argument distinguishes it from the loss table. At round \(t\), write \(\ell^t:=\ell(x^t)\); player \(i\) incurs \(\ip{x_i^t}{\ell_i^t}\) and observes the full vector \(\ell_i^t=\ell_i(x_{-i}^t)\).

Computing a Nash equilibrium is PPAD-complete even for two-player general-sum games \citep{DaskalakisGoldbergPapadimitriou2009,ChenDengTeng2009}. Coarse correlated equilibrium (CCE) \citep{MoulinVial1978} is a tractable relaxation: every Nash equilibrium induces a CCE, but a CCE allows correlated actions and tests only unconditional deviations. A CCE can be computed by linear programming with a time complexity polynomial in \(n\prod_i d_i\) and the payoff bit length \citep{PapadimitriouRoughgarden2008}.

\begin{definition}[\(\varepsilon\)-approximate coarse correlated equilibrium]
\label{def:cce}
For a joint distribution \(\mu\) on \(\prod_jA_j\), define its CCE gap by
\begin{equation}
\operatorname{Gap}_{\rm CCE}(\mu)
:=\max_{i\in[n]}\max_{a_i'\in A_i}
\E_{a\sim\mu}\!\left[[\ell(a)]_i-[\ell(a_i',a_{-i})]_i\right].
\label{eq:cce-gap}
\end{equation}
For \(\varepsilon\ge0\), \(\mu\) is an \(\varepsilon\)-approximate CCE if \(\operatorname{Gap}_{\rm CCE}(\mu)\le\varepsilon\).
\end{definition}

No-regret learning approaches CCE by controlling each player's external regret \citep{HartMasColell2001,CBL2006}, which is defined as the difference between its cumulative loss and that of its best fixed action in hindsight.

\begin{definition}[External regret]
\label{def:external-regret}
At horizon \(T\ge1\), player \(i\)'s individual external regret is
\begin{equation}
\Reg_i(T):=\sum_{t=1}^T\ip{x_i^t}{\ell_i^t}-\min_{a_i'\in A_i}\sum_{t=1}^T\ell_{i,a_i'}^t.
\label{eq:regret}
\end{equation}
\end{definition}

Let \(\widehat\mu_T:=T^{-1}\sum_{t=1}^T\bigotimes_{j=1}^n x_j^t\) be the time-averaged joint distribution, the uniform mixture of the product distributions played at each round. Under $\widehat\mu_T$, each fixed-action deviation gain is its average gain over the played rounds. As a result, the following standard result holds:
\begin{equation}
\operatorname{Gap}_{\rm CCE}(\widehat\mu_T)=\frac1T\max_{i\in[n]}\Reg_i(T).
\label{eq:cce-generic}
\end{equation}
Therefore, individual-regret bounds directly control the CCE approximation error.

\paragraph{Optimistic Hedge.}
Hedge is an entropy-regularized follow-the-regularized-leader (FTRL) algorithm \citep{FreundSchapire1997}. Optimistic Hedge uses \(\ell_i^t\) to predict \(\ell_i^{t+1}\) \citep{RakhlinSridharan2013}. All players use a constant step size \(\eta>0\), with \(\ell_i^0=0\) and \(x_i^1=\1_{d_i}/d_i\), where \(\1_{d_i}\) is the all-ones vector. Eliminating cumulative losses from its closed-form update gives
\begin{equation}
x_{i,a}^{t+1}=\frac{x_{i,a}^t\exp\{-\eta(2\ell_{i,a}^t-\ell_{i,a}^{t-1})\}}
{\sum_{b\in A_i}x_{i,b}^t\exp\{-\eta(2\ell_{i,b}^t-\ell_{i,b}^{t-1})\}},\qquad t\ge1,\quad a\in A_i.
\label{eq:omwu}
\end{equation}
For the update in \Cref{eq:omwu}, \cite{ChenPeng2020} obtain $O(\log^{5/6}d_{\max}T^{1/6})$ individual regret in two-player games. For $n$ players, \cite{DFG2021} establish $O(n\log d_i\log^4T)$, and \cite{Ha2026LogHedge} sharpens this to $O(\sqrt n\log d_i\log T)$. We show that plain Optimistic Hedge can attain horizon-independent $O_{n,d}(1)$ regret for a sufficiently small constant step size.

\section{Main Result}
\label{sec:main-results}

The following theorem shows that Optimistic Hedge admits a horizon-independent individual-regret bound at sufficiently small constant step sizes.

\begin{restatable}[Horizon-independent individual regret]{theorem}{horizonRegretRestated}
\label{thm:constant}
Fix \(n\) players with action counts \(d_1,\ldots,d_n\). There exist \(\eta_\star>0\) and \(C_\star<\infty\), depending only on \((n,d_1,\ldots,d_n)\), such that, for every game in the setup of \Cref{sec:setup} and every common constant step size \(0<\eta\le\eta_\star\), the Optimistic Hedge iterates satisfy, for every \(i\in[n]\) and \(T\ge1\),
\begin{equation}
\Reg_i(T)\le\frac{\log d_i}{\eta}+C_\star\eta.
\label{eq:constant-regret}
\end{equation}
\end{restatable}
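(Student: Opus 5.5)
The plan starts from the standard RVU-type bound for Optimistic Hedge with a constant step size. For each player \(i\) and every horizon \(T\),
\begin{equation*}
\Reg_i(T)\le\frac{\log d_i}{\eta}+\eta\sum_{t=1}^T\norm{\ell_i^t-\ell_i^{t-1}}_\infty^2-\frac{c}{\eta}\sum_{t=1}^T\norm{x_i^{t+1}-x_i^t}_1^2 .
\end{equation*}
It is convenient to sharpen this to the weighted-gap form of \cite{Ha2026LogHedge}, where the middle term becomes a sum of variances \(\Var_{x_i^t}(\ell_i^t-\ell_i^{t-1})\) that are weighted by the current iterate. It then suffices to show that the path-length-type sum \(\sum_t\Var_{x_i^t}(\ell_i^t-\ell_i^{t-1})\) is at most \(C_\star\), uniformly in \(T\) and in the game, minus whatever the negative stability term absorbs. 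The only work is to make this bound independent of \(T\); the \(\log d_i/\eta\) term is already in final form.

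The next step rewrites the dynamics as a recurrence on a compact space. We take the state \(z^t=(x^t,\ell^{t-1})\), possibly augmented with the loss table \(\ell\in[0,1]^{n\prod_j d_j}\) and with \(\eta\in[0,\eta_\star]\), and set \(z^{t+1}=F(z^t)\), where \(F\) is given by \eqref{eq:omwu}. After rescaling losses by \(\eta\) in the manner of \cite{DFG2021}, \(F\) is real-analytic in all arguments, including the parameters \((\ell,\eta)\), on a compact neighborhood of the relevant domain. Consider the first-order weighted-gap differences \(g^t:=\ell^t-\ell^{t-1}\) and their higher finite differences \(D^hg^t\). Each \(D^hg^t\) equals \(\Phi_h(z^t)\) for an analytic function \(\Phi_h\) of the state, and by the high-order smoothness of \cite{DFG2021} it scales like \(\eta^{h}\) times the first difference. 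Now look at the ideals generated by \(\Phi_1,\ldots,\Phi_h\) in the ring of analytic germs on the compact set. By Frisch's theorem \citep{Frisch1967} this ring is Noetherian, so the increasing chain of ideals stabilizes at some finite order \(H=H(n,d)\). We therefore get an exact relation
\begin{equation*}
\Phi_{H+1}=\sum_{h\le H}a_h\Phi_h,
\end{equation*}
with bounded analytic coefficients \(a_h\), valid uniformly over all games and all small \(\eta\). This is the finite-order difference relation, and both \(H\) and the bounds on \(a_h\) depend only on \((n,d)\). No constructive control is available, which is why the constants are implicit.

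With the finite-order relation in hand, the sum is closed by a finite-difference interpolation argument. The relation expresses the \((H+1)\)-st difference of \(g\) through lower ones, and \(\eta\)-scaling makes the coefficients small. A discrete Gronwall / interpolation inequality between orders, of the form \(\sum_t\norm{D^hg^t}^2\lesssim\sum_t\norm{D^{h-1}g^t}\norm{D^{h+1}g^t}\) plus boundary terms, then bounds \(\sum_t\norm{g^t}^2\) by the boundary terms. These are at most \(O_{n,d}(1)\) per order and are independent of \(T\), because every quantity lies in the compact state space. Feeding \(\sum_t\norm{g^t}^2\le C_\star\) into the RVU bound gives \eqref{eq:constant-regret}. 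The \(\eta\)-dependence works out because \(\norm{\ell_i^t-\ell_i^{t-1}}\lesssim\sum_j\norm{x_j^t-x_j^{t-1}}_1\), so the negative terms from the other players absorb the bulk of the sum and leave only a bounded remainder times \(\eta\).

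The main obstacle is the Noetherianity step. Frisch's theorem gives Noetherianity for analytic functions on a compact semianalytic (Stein) neighborhood, and we must make sure of two things. First, the generators must be defined on one common domain containing all states, all games, and all \(\eta\le\eta_\star\). Second, the resulting ideal-membership relation must be usable as a quantitative inequality with coefficient bounds that are uniform, not merely pointwise. The approach I would try first is to complexify \(F\) on a thin polydisc around the compact real domain and invoke Frisch there. I would then take the coefficients \(a_h\) as holomorphic functions bounded by compactness, and finally check, using the \(\eta\)-homogeneity from \cite{DFG2021}, that the relation stays nondegenerate as \(\eta\to0\), so that the interpolation step actually gains a factor of \(\eta\).
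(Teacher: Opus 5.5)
There is a genuine gap, and it sits at the step you flag as ``check that the relation stays nondegenerate as $\eta\to0$.'' You apply Noetherianity to the \emph{unscaled} differences $\Phi_h(z^t)=D^hg^t$, so the relation $\Phi_{H+1}=\sum_{h\le H}a_h\Phi_h$ has merely bounded coefficients. That yields $\norm{D^{H+1}g}_{\ell_2}\le A\sum_h\norm{D^hg}_{\ell_2}$, which is no better than the trivial $\norm{D^{H+1}g}_{\ell_2}\le2\norm{D^Hg}_{\ell_2}$. The absorption step needs coefficients of size $\eta^{H+1-h}$, so that interpolation produces $(\eta^m\norm y_{\ell_2})^{1-k/m}\norm{\Delta^my}_{\ell_2}^{k/m}$. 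Nothing forces the $a_h$ to vanish to that order at $\eta=0$, and dividing by powers of $\eta$ makes them singular. The missing idea is to apply \Cref{thm:finite-dependence} to the \emph{scaled} maps $H_{i,k}$, with $H_{i,k}(s^t;\eta)=\eta^{-k}\Delta^kh_i^{t+k}$. These extend analytically through $\eta=0$ because the zero-step identity $G^2(\cdot;0)=G(\cdot;0)$ makes each numerator $H_{i,k}\circ G^2-H_{i,k}\circ G$ vanish at $\eta=0$ (\Cref{lem:scaled-difference-analytic}). Frisch's theorem on $K_\star=\mathcal S\times\mathcal G\times\{0\}$ then gives $\Delta^mh_i^{t+m}=\sum_kc_{k,t}\eta^{m-k}\Delta^kh_i^{t+k}$ with the $\eta$-gain built in. The uniformity you worry about then follows from continuity on $K_0\times[-\eta_i,\eta_i]$ and invariance of $\mathcal S$; no complexification is needed.

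The second gap is that you run the interpolation on the wrong sequence and aim at the wrong inequality. Interpolation between difference orders only bounds higher orders \emph{relative} to the lowest one. It never bounds the lowest-order mass $\sum_t\norm{g^t}^2$ by boundary terms. A $T$-independent bound on the loss path length is not available anyway, since Optimistic Hedge need not converge in general-sum games. Nor can you absorb via $\norm{\ell_i^t-\ell_i^{t-1}}\lesssim\sum_j\norm{x_j^t-x_j^{t-1}}_1$ and the other players' stability terms. That only controls $\sum_i\Reg_i(T)$, and individual regrets in general-sum games can be negative. What you need is a bound relative to player $i$'s \emph{own} negative term $\frac\eta3\sum_t\Var_{x_i^t}(\ell_i^{t-1})$. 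This is why the paper interpolates on the weighted gaps $h_i^t$ rather than on $\ell_i^t-\ell_i^{t-1}$. For these, $\norm{h_i^t}_2^2=\Var_{x_i^t}(\ell_i^{t-1})$ exactly, and the prediction-error variance is controlled by $\norm{\Delta h_i^t}_2^2$ up to $O(\eta^2)\norm{h_i^t}_2^2$ corrections (\Cref{lem:weighted-gap}). Interpolation then gives $\sum_t\norm{\Delta h_i^t}_2^2\le C\eta^2\sum_t\norm{h_i^t}_2^2+C'$, which the own negative term absorbs. Making $h_i$ analytic up to the simplex boundary requires the square-root state $r_i=\sqrt{x_i}$, which your state $(x^t,\ell^{t-1})$ lacks. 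Finally, the boundary terms are not harmlessly $O(1)$. A sharp truncation leaves $O(1)$ residuals in the order-$m$ relation, which after interpolation cost a term of order $\eta^{3-2m}$ rather than $C_\star\eta$. The gradual taper of \Cref{lem:exact-boundary-extension} keeps the residual at $O(\eta^{m-1/2})$.
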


The logarithmic guarantee for Optimistic Hedge uses a horizon-tuned constant step size \citep{Ha2026LogHedge}. Specifically, the parameterized bound of \cite{Ha2026LogHedge} leaves a remainder proportional to $\eta T\exp\{-c/(\sqrt n\,\eta)\}$ (for a universal $c>0$), which is exponentially small in the inverse step size but still grows with the horizon. In contrast, \Cref{thm:constant} yields a horizon-independent $O_{n,d}(1)$ individual-regret bound.

The following corollary translates this regret bound into a convergence guarantee for time-averaged play, using \Cref{eq:cce-generic}.

\begin{restatable}[Coarse-correlated-equilibrium convergence]{corollary}{cceRestated}
\label{cor:cce}
Under the conditions of \Cref{thm:constant}, the time-averaged joint distribution \(\widehat\mu_T\) satisfies, for every \(T\ge1\),
\begin{equation}
\operatorname{Gap}_{\rm CCE}(\widehat\mu_T)
\le \frac1T\max_{i\in[n]}\left\{\frac{\log d_i}{\eta}+C_\star\eta\right\}.
\label{eq:cce-rate}
\end{equation}
\end{restatable}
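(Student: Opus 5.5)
The statement to prove is \Cref{cor:cce}. Given \Cref{thm:constant}, it reduces to the identity \Cref{eq:cce-generic}. I would first verify that identity directly from \Cref{def:cce}, then substitute the regret bound. There is no horizon-dependent term to handle beyond what \Cref{thm:constant} already controls.

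\textbf{Step 1: rewrite the gap.} Fix a player \(i\) and a deviation \(a_i'\in A_i\). Because \(\widehat\mu_T\) is the uniform mixture of the product distributions \(\bigotimes_j x_j^t\), linearity of expectation gives
\[
\E_{a\sim\widehat\mu_T}\!\left[[\ell(a)]_i-[\ell(a_i',a_{-i})]_i\right]
=\frac1T\sum_{t=1}^T\left(\E_{a\sim\bigotimes_jx_j^t}[\ell(a)]_i-\E_{a_{-i}\sim\bigotimes_{j\ne i}x_j^t}[\ell(a_i',a_{-i})]_i\right).
\]
By \Cref{eq:loss-vector} and the identity \(\ell_i(x)=\ip{x_i}{\ell_i(x_{-i})}\), the first expectation equals \(\ip{x_i^t}{\ell_i^t}\). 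The second equals \(\ell_{i,a_i'}^t\), because the marginal of \(a_{-i}\) under the product distribution is \(\bigotimes_{j\ne i}x_j^t\).

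\textbf{Step 2: identify the regret.} Take the maximum over \(a_i'\). Maximizing \(-\sum_t\ell^t_{i,a_i'}\) is the same as subtracting \(\min_{a_i'}\sum_t\ell^t_{i,a_i'}\). So the inner maximum equals \(\Reg_i(T)/T\) by \Cref{def:external-regret}. Maximizing over \(i\) then gives \Cref{eq:cce-generic}.

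\textbf{Step 3: apply the theorem.} Under the hypotheses of \Cref{thm:constant} (with \(0<\eta\le\eta_\star\)), each \(\Reg_i(T)\le \log d_i/\eta+C_\star\eta\) for every \(T\ge1\). Since the map \(u\mapsto u/T\) is monotone, taking the maximum over \(i\) yields \Cref{eq:cce-rate}.

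None of these steps is a real obstacle, since all the difficulty lives in \Cref{thm:constant}. The only point needing care is the marginalization in Step 1. The deviation term must use the opponents' marginal at the same round \(t\), not a marginal of the mixture. This works because \(\widehat\mu_T\) is a mixture over rounds, and each round's component is a product distribution, so we can condition on the round first.
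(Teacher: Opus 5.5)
Your proposal is correct and follows the paper's proof, which likewise combines the identity \Cref{eq:cce-generic} with the regret bound of \Cref{thm:constant} and takes the maximum over players. The only difference is that your Steps 1--2 derive \Cref{eq:cce-generic} from \Cref{def:cce} and \Cref{def:external-regret} by conditioning on the round, whereas the paper states that identity as a standard fact in \Cref{sec:setup}.
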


\Cref{cor:cce} improves the horizon dependence of the $O(\sqrt n\log d_{\max}\log T/T)$ CCE bound of \cite{Ha2026LogHedge} from $\log T/T$ to $1/T$. The improvement is in $T$, not in game dimensions: our $O_{n,d}(1/T)$ guarantee leaves $n,d$ unquantified. We leave explicit bounds on $\eta_\star,C_\star$ and a dimension-explicit horizon-independent individual regret bound to future work.

\par\addvspace{0.5\baselineskip}
\begin{remark}[Comparison with modified dynamics]
\label{rem:modified-dynamics}
ECHO-OFTRL achieves $O(n^{21}\log^4 d_{\max})$ individual external regret \citep{LiuFarinaOzdaglar2026}, and HOOD achieves $O(n^3\log^2 d_{\max})$ \citep{AbbadiLarakiMertikopoulos2026}. Hybrid-regularized OFTRL with the Blum--Mansour reduction gives $O(nd_{\max}^2\sqrt{\log d_{\max}\log T})$ swap regret and a correlated-equilibrium guarantee \citep{Tsuchiya2026Hybrid}. However, these modified dynamics involve inner optimization or root-finding subroutines. Our result instead establishes a horizon-independent regret bound for plain Optimistic Hedge, which admits a closed-form update.
\end{remark}

\section{Proof Overview}
\label{sec:analysis}

We follow the high-order smoothness framework of \cite{DFG2021} and the weighted-gap reduction of \cite{Ha2026LogHedge}. Our real-analytic state representation of Optimistic Hedge yields an exact finite-order relation through the Noetherianity theorem of \cite{Frisch1967}. Finite-difference interpolation then converts this relation into a horizon-independent regret bound.

\subsection{Reduction to a First-Order Difference Estimate}

For $x\in\Delta_q$ and $v\in\R^q$, write the variance of a vector as $\Var_x(v):=\sum_ax_a(v_a-\ip{x}{v})^2$. \cite{Ha2026LogHedge} gives the following small-step-size specialization of Lemma~4.1 in \cite{DFG2021}.

\begin{restatable}[Optimistic regret bound {\citep[Lemma~5.1]{Ha2026LogHedge}}]{lemma}{regretSimpleRestated}
\label{lem:regret-simple}
There is a universal \(\eta_{\rm rvu}>0\) such that, for every player, horizon, and \(0<\eta\le\eta_{\rm rvu}\),
\begin{equation}
\Reg_i(T)
\le\frac{\log d_i}{\eta}
 +\frac{2\eta}{3}\sum_{t=1}^T\Var_{x_i^t}(\ell_i^t-\ell_i^{t-1})
 -\frac{\eta}{3}\sum_{t=1}^T\Var_{x_i^t}(\ell_i^{t-1}).
\label{eq:regret-simple}
\end{equation}
\end{restatable}

The first sum charges prediction error; the second supplies the negative variance term. Bounding the first by half the second, up to a horizon-independent remainder, controls regret. The time-varying quadratic form $\Var_{x_i^t}(\cdot)$ prevents a direct comparison in a fixed sequence norm; hence, we incorporate its probability weights into the vectors themselves. The pairwise identity
\begin{equation}
\Var_x(v)=\sum_{a<b}x_ax_b(v_a-v_b)^2
\label{eq:pairwise-variance}
\end{equation}
motivates the weighted pairwise loss-gap vector $h_i^t\in\R^{\binom{d_i}{2}}$ of \cite{Ha2026LogHedge}, defined for $a<b$ by
\begin{equation}
[h_i^t]_{ab}
:=\sqrt{x_{i,a}^t x_{i,b}^t}
  \bigl(\ell_{i,a}^{t-1}-\ell_{i,b}^{t-1}\bigr).
\label{eq:h}
\end{equation}
The following lemma from \cite{Ha2026LogHedge} controls the sum of prediction-error variances $\Var_{x_i^t}(\ell_i^t-\ell_i^{t-1})$ in terms of the weighted-gap differences $\Delta h_i^t:=h_i^{t+1}-h_i^t$, accounting for time-varying weights.

\begin{restatable}[Weighted-gap reduction {\citep[Lemma~5.2]{Ha2026LogHedge}}]{lemma}{weightedGapRestated}
\label{lem:weighted-gap}
There are universal constants \(C>0\) and \(\eta_{\rm gap}>0\) such that, whenever \(0<\eta\le\eta_{\rm gap}\),
\begin{equation}
\begin{aligned}
\sum_{t=1}^T\Var_{x_i^t}(\ell_i^{t-1})
 &=\sum_{t=1}^T\norm{h_i^t}_2^2,\\
\sum_{t=1}^T\Var_{x_i^t}(\ell_i^t-\ell_i^{t-1})
 &\le\frac32\sum_{t=1}^T\norm{h_i^{t+1}-h_i^t}_2^2
   +C\eta^2\sum_{t=1}^T\norm{h_i^t}_2^2+C\eta^2.
\end{aligned}
\label{eq:weighted-gap}
\end{equation}
\end{restatable}

\Cref{lem:regret-simple,lem:weighted-gap} reduce the remaining challenge to bounding the sum of first-order differences of weighted gaps $\sum_{t=1}^T\norm{\Delta h_i^t}_2^2$ by the sum of weighted gaps themselves $\sum_{t=1}^{T+1}\norm{h_i^t}_2^2$, up to a horizon-independent remainder. The following key lemma supplies this bound.

\begin{lemma}[First-order difference estimate]
\label{lem:constant-temporal}
Fix \(n\) and the action counts \(d\). There are constants \(C,C'<\infty\) and \(\eta_{\rm temp}>0\), depending only on \(n,d\), such that, for every game of these dimensions, every player \(i\in[n]\), every horizon \(T\ge1\), and every common constant step size \(0<\eta\le\eta_{\rm temp}\),
\begin{equation}
\sum_{t=1}^T\norm{h_i^{t+1}-h_i^t}_2^2
\le C\eta^2\sum_{t=1}^{T+1}\norm{h_i^t}_2^2+C'.
\label{eq:constant-temporal}
\end{equation}
\end{lemma}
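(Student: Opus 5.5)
We will realize Optimistic Hedge as a real-analytic recurrence on a compact state space, obtain an exact finite-order difference relation from Noetherianity, and interpolate back down to first-order differences. Concretely, we take the state at round \(t\) to be \(z^t:=(x^t,\ell^{t-1})\in\prod_j\Delta_{d_j}\times\prod_j[0,1]^{d_j}\). Then \Cref{eq:omwu} reads \(z^{t+1}=F_{\eta,\ell}(z^t)\), where \(F\) is real-analytic jointly in \((z,\eta,\text{loss table})\) on a neighborhood of the compact set \(K:=\text{states}\times[0,\eta_0]\times[0,1]^{n\prod_jd_j}\). The weighted gap \(h_i^t=H_i(z^t)\) is also analytic, except for the square roots \(\sqrt{x_ax_b}\). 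To remove this defect we reparametrize by \(y_{i,a}=\sqrt{x_{i,a}}\); the update becomes \(y_{i,a}\mapsto y_{i,a}e^{-\eta(\cdots)/2}/\sqrt{\sum_b y_{i,b}^2e^{-\eta(\cdots)}}\), which is analytic, and \(h_i\) becomes a polynomial in the state.

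The core step is the \(k\)-th forward difference \(\Delta^k h_i^t\). By induction this is an analytic function \(G_k(z^t;\eta,\ell)\), and each \(G_k\) vanishes to order \(k\) in \(\eta\) up to terms controlled by \(h\) itself, since \(F=\mathrm{id}+O(\eta)\). Consider the ideals generated by \(\{G_1,\ldots,G_k\}\) in the ring of germs of analytic functions along the compact \(K\). Frisch's theorem says this ring is Noetherian, so the chain stabilizes at some \(N=N(n,d)\). Hence there are analytic coefficients \(c_k\) with
\begin{equation}
G_{N+1}=\sum\nolimits_{k=1}^N c_k\,G_k\quad\text{on a neighborhood of }K .
\end{equation}
By compactness the \(c_k\) are uniformly bounded. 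Tracking the \(\eta\)-order (working with rescaled functions \(\eta^{-k}G_k\), or including \(\eta\) as a variable and dividing out) should upgrade this to a relation \(\Delta^{N+1}h=\sum_k \eta^{N+1-k}\tilde c_k\,\Delta^k h\) with bounded \(\tilde c_k\). Together with \(G_k=O(\eta^{k})\|h\|+O(\eta^k)\), this yields \(\|\Delta^{N+1}h_i^t\|\lesssim \eta\sum_{k\le N}\|\Delta^k h_i^t\|\) pointwise in \(t\).

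Next we interpolate. Write \(S_k:=\sum_t\|\Delta^kh_i^t\|_2^2\). Summation by parts gives \(S_k\le \sqrt{S_{k-1}S_{k+1}}+B\), where the boundary terms \(B=O_{n,d}(1)\) because \(h\) is bounded on the compact set. This is the only place a constant independent of \(T\) enters, giving \(C'\). Log-convexity of \(k\mapsto S_k\), combined with the top relation \(S_{N+1}\le C\eta^2\sum_{k\le N}S_k+\text{bdry}\), then gives \(S_1\le C\eta^2 S_0+C'\), which is \Cref{eq:constant-temporal}. This uses \(S_0\le\sum_{t\le T+1}\|h_i^t\|^2\), the fact that \(\eta\le\eta_{\rm temp}\) is small enough to absorb, and bounded \(N\). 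The argument is a discrete Kolmogorov/Landau-type inequality.

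The main obstacle is the second step: Frisch yields a relation with analytic coefficients, but the \(\eta\)-scaling of those coefficients is what ultimately produces the factor \(\eta^2\), and naive Noetherianity loses it. I would first try to set up the chain in the graded ring of functions of \((z,\eta)\), working with the rescaled functions \(\eta^{-k}G_k\), which are analytic by the \(O(\eta^k)\) vanishing. The stabilized relation then lives at \(\eta\)-order uniformly. If that fails, I would fall back on applying the Noetherian argument to the bracketed family after the substitution \(t\mapsto\eta t\), so that differences at scale \(\eta\) mimic derivatives.
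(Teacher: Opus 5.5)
Your architecture (square-root state, loss table inside the compact set, Frisch's theorem, then interpolation) matches the paper's. The remedy you list first, running the Noetherian argument on rescaled maps $\eta^{-k}G_k$, is also how the paper obtains the $\eta$-graded relation. The gap is your reason for why these rescaled maps are analytic. The update is \emph{not} $\mathrm{id}+O(\eta)$: at $\eta=0$ it sends $(r,\ell)\mapsto(r,\ell_g(r))$, overwriting the stored loss. With $G_k(z^t):=\Delta^kh_i^t$ one computes $G_k(z;0)=(-1)^k\bigl[H_{i,0}(r,\ell)-H_{i,0}(r,\ell_g(r))\bigr]$. This is nonzero at generic points with $\ell\neq\ell_g(r)$, so $\eta^{-k}G_k$ is singular at $\eta=0$ for every $k\ge1$. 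Shifting the window by one step does not fix $k\ge2$: $\Delta^2h_i^{t+1}=\eta\bigl[H_{i,1}(G(s^t))-H_{i,1}(s^t)\bigr]$, and $H_{i,1}(\cdot;0)$ still depends on the stored loss through the other players' predictions $2\ell_g(r)-\ell$. Along the trajectory these differences are $O(\eta^k)$, but Frisch's theorem needs analyticity on a neighborhood of the whole compact set. The missing idea is the zero-step idempotence $G^2(\cdot;0)=G(\cdot;0)$ (\Cref{lem:analytic-state-representation}) together with the staggered recursion $H_{i,k+1}:=(H_{i,k}\circ G^2-H_{i,k}\circ G)/\eta$. Its numerator vanishes identically at $\eta=0$, so each $H_{i,k}$ extends analytically (\Cref{lem:scaled-difference-analytic}), and $H_{i,k}(s^t)=\eta^{-k}\Delta^kh_i^{t+k}$ on the trajectory. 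Apply Frisch's theorem in module form to $(H_{i,k})_{k\ge0}$; the maps are vector-valued, see the proof of \Cref{thm:finite-dependence}. This gives $\Delta^mh_i^{t+m}=\sum_{k<m}c_{k,t}\eta^{m-k}\Delta^kh_i^{t+k}$ with $|c_{k,t}|\le A$ directly; no graded ring or time rescaling is needed.

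The interpolation step also fails as written, for two scaling reasons. First, you weaken the graded relation to $S_{N+1}\le C\eta^2\sum_{k\le N}S_k$. Combined with log-convexity, this can give at best $S_1\lesssim\eta^{2/(N+1)}S_0$, since $S_k=\lambda^kS_0$ with $\lambda=c\,\eta^{2/(N+1)}$ satisfies both. You must keep the weights: $S_m\lesssim\sum_{k<m}\eta^{2(m-k)}S_k$ with $m=N+1$. Second, boundary terms of size $O(1)$ are too crude. An additive $O(1)$ at order $m$ passes through $S_1\le S_0^{1-1/m}(\cdots)^{1/m}$ into an additive constant of order $\eta^{-2(m-1)}$, which is not uniform in $\eta$ as the lemma requires. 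The right sizes come from $\|\Delta^kh_i^{t+k}\|\le M\eta^k$. After discarding the first $m$ indices (an $O(1)$ cost in $S_1$), the order-$k$ summation-by-parts boundary terms are $O(\eta^{2k-1})$, and the index mismatch in the summed top relation is $O(\eta^{2m})$. Then $\sigma_k:=\eta^{-2k}S_k$ satisfies $\sigma_k\le\sqrt{\sigma_{k-1}\sigma_{k+1}}+O(1/\eta)$ and $\sigma_m\le C\sum_{k<m}\sigma_k+O(1)$. If $\sigma_1\ge K(\sigma_0+1/\eta)$ with $K$ large, the ratios $\sigma_{k+1}/\sigma_k$ stay $\gtrsim K$ and contradict the top relation, so $S_1\le C\eta^2S_0+C\eta$. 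Repaired this way, your finite-interval summation by parts is a genuine and somewhat lighter alternative to the paper's smooth cutoff (\Cref{lem:exact-boundary-extension}) plus Fourier interpolation (\Cref{lem:fourier-tools}). The cutoff's role there is exactly to make the top-order boundary error $O(\eta^{m-1/2})$ in $\ell_2$.
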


Unlike the first-order difference estimate of \cite{Ha2026LogHedge}, this bound is horizon-independent: $C'$ depends only on the game dimensions $(n,d)$, but not on the horizon $T$. \Cref{lem:regret-simple,lem:weighted-gap,lem:constant-temporal} yield \Cref{thm:constant}, so our goal now reduces to proving \Cref{lem:constant-temporal}.

\subsection{Reduction to a High-Order Difference}
\label{sec:high-order-reduction}

For a sequence $y=(y^t)_{t\in\mathbb Z}$, define finite differences by $\Delta^0y^t:=y^t$, $\Delta y^t:=y^{t+1}-y^t$, and $\Delta^{k+1}y^t:=\Delta(\Delta^ky^t)$. For a sequence with a finite support, write $\norm y_{\ell_2}^2:=\sum_t\norm{y^t}^2$.

To prove \Cref{lem:constant-temporal}, we use the following lemma to control the first-order difference $\norm{\Delta y}_{\ell_2}$ in terms of the sequence norm $\norm y_{\ell_2}$ and the order-$m$ difference $\norm{\Delta^m y}_{\ell_2}$.

\begin{restatable}[Finite-difference interpolation]{lemma}{fourierToolsRestated}
\label{lem:fourier-tools}
For every finitely supported sequence \(y=(y^t)_{t\in\mathbb Z}\) in a finite-dimensional Hilbert space, the following holds for all integers \(m\ge1\) and \(0\le j\le m\):
\begin{equation}
\norm{\Delta^j y}_{\ell_2}
\le \norm{y}_{\ell_2}^{1-j/m}
 \norm{\Delta^m y}_{\ell_2}^{j/m}.
\label{eq:difference-interpolation}
\end{equation}
\end{restatable}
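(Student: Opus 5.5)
The plan is to pass to the Fourier side and reduce the claim to Hölder's inequality for a single spectral measure. Since \(y\) is finitely supported with values in a finite-dimensional Hilbert space \(H\), its discrete-time Fourier transform
\[
\hat y(\theta):=\sum_{t\in\mathbb Z}y^te^{-\mathrm{i}t\theta},\qquad \theta\in[-\pi,\pi],
\]
is an \(H\)-valued trigonometric polynomial (complexify \(H\) if it is real). Parseval's identity gives
\[
\norm{y}_{\ell_2}^2=\frac1{2\pi}\int_{-\pi}^{\pi}\norm{\hat y(\theta)}^2\,\dd\theta .
\]
The forward difference acts as a multiplier: \(\widehat{\Delta y}(\theta)=(e^{\mathrm{i}\theta}-1)\hat y(\theta)\). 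Iterating, \(\widehat{\Delta^k y}=(e^{\mathrm{i}\theta}-1)^k\hat y\), and \(\Delta^k y\) is again finitely supported, so Parseval applies to it as well. Writing \(w(\theta):=|e^{\mathrm{i}\theta}-1|^2=4\sin^2(\theta/2)\ge0\), we get for every \(k\ge0\)
\[
\norm{\Delta^k y}_{\ell_2}^2=\frac1{2\pi}\int_{-\pi}^{\pi}w(\theta)^k\norm{\hat y(\theta)}^2\,\dd\theta .
\]

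With this in hand, introduce the finite positive measure \(\dd\mu(\theta):=\frac1{2\pi}\norm{\hat y(\theta)}^2\,\dd\theta\). Then \(\norm{\Delta^k y}_{\ell_2}^2=\int w^k\,\dd\mu\), and the claim becomes the log-convexity in \(k\) of the moments \(\int w^k\,\dd\mu\). The cases \(j=0\) and \(j=m\) are identities (using \(0^0=1\)), so assume \(0<j<m\). Write \(w^j=w^{0\cdot(1-j/m)}\cdot w^{m\cdot(j/m)}\) and apply Hölder's inequality with conjugate exponents \(p=m/(m-j)\) and \(q=m/j\):
\[
\int w^j\,\dd\mu\le\Bigl(\int 1\,\dd\mu\Bigr)^{1-j/m}\Bigl(\int w^m\,\dd\mu\Bigr)^{j/m}.
\]
Taking square roots gives exactly \eqref{eq:difference-interpolation}.

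There is no real obstacle here; the only points needing care are bookkeeping. First, complexify a real Hilbert space, which preserves the norms. Second, check that the multiplier identity holds for finitely supported sequences: this is a trivial reindexing of a finite sum. Third, note that if \(\mu\) is the zero measure then both sides vanish. An alternative, Fourier-free route also works. Summation by parts gives \(\norm{\Delta^k y}^2=-\ip{\Delta^{k+1}y}{\Delta^{k-1}y^{\cdot+1}}\) up to a shift, which is at most \(\norm{\Delta^{k-1}y}\,\norm{\Delta^{k+1}y}\). So the sequence \(a_k:=\log\norm{\Delta^k y}_{\ell_2}\) is discretely convex, and convexity between the endpoints \(0\) and \(m\) yields the claim. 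I would present the Fourier/Hölder proof as primary because it is a single line once Parseval is set up.
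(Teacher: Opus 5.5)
Your proposal is correct and is essentially the paper's proof. Both apply Parseval with the multiplier $|e^{\mathrm{i}\theta}-1|^2$ and then Hölder's inequality with exponents $m/(m-j)$ and $m/j$. Your spectral-measure formulation is only a notational repackaging, and your summation-by-parts log-convexity route is a valid extra alternative that the paper does not use.
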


Intuitively, Parseval's identity expresses the difference norms in a frequency domain, and H\"older's inequality interpolates between orders $0$ and $m$. We defer the proof to \appref{app:common}.

\Cref{lem:fourier-tools} suggests the following route. Suppose the weighted gaps satisfy the finite-order relation
\begin{equation}
\Delta^m h_i^{t+m}
=\sum_{k=0}^{m-1}c_{k,t}\eta^{m-k}\Delta^k h_i^{t+k},
\qquad |c_{k,t}|\le A,\quad t\ge1,
\label{eq:target-finite-relation}
\end{equation}
where $m,A$ depend only on $(n,d)$. At each horizon, we can construct a finitely supported cutoff $y$ of $(h_i^t)$ whose error in \eqref{eq:target-finite-relation} is bounded by $O_{n,d}(\eta^{m-1/2})$ in $\ell_2$ norm (\Cref{lem:exact-boundary-extension}). Taking norms gives
\[
\norm{\Delta^m y}_{\ell_2}
\le \tfrac12\norm{\Delta^m y}_{\ell_2}
       +O_{n,d}(\eta^m)\norm y_{\ell_2}+O_{n,d}(\eta^{m-1/2}).
\]
Here we use \Cref{lem:fourier-tools} and Young's inequality. Absorbing the first term on the right and applying \Cref{lem:fourier-tools} with $j=1$ and Young's inequality once more yields
\[
\norm{\Delta y}_{\ell_2}^2
\le\norm y_{\ell_2}^{2-2/m}
    \norm{\Delta^m y}_{\ell_2}^{2/m}
\le O_{n,d}(\eta^2)\norm y_{\ell_2}^2+O_{n,d}(\eta).
\]
The cutoff construction $y$ in \Cref{lem:exact-boundary-extension} transfers this estimate back to $h_i^t$, yielding \Cref{lem:constant-temporal}, and hence, \Cref{thm:constant}. Notice that the above derivation crucially hinges on the exact finite-order relation \eqref{eq:target-finite-relation}; we prove that precise finite-order relation in the following subsections.

\subsection{State Recurrence and Finite-Order Difference Relation}
\label{sec:state-recurrence}

To obtain the finite-order relation \eqref{eq:target-finite-relation}, we write Optimistic Hedge as a recurrence on a compact state space. Specifically, we encode the scaled differences $\eta^{-k}\Delta^k h_i^{t+k}$ by analytic maps. Then, Frisch's Noetherianity theorem \citep{Frisch1967}, which guarantees finite generation of analytic spans near compact semianalytic sets, gives the finite relation we need.

\paragraph{Square-root state.}
The factor $\sqrt{x_{i,a}x_{i,b}}$ in \Cref{eq:h} is not analytic in the probabilities at the boundary. Hence, we set $r_i(a)=\sqrt{x_i(a)}$, which turns the pair weights into the polynomial products $r_i(a)r_i(b)$ and places each $r_i$ on a compact nonnegative unit sphere. We store the loss profile as $\ell=(\ell_i)_{i=1}^n$ and use the state space
\[
\mathcal S
:=\left\{(r,\ell):
 r_i\in\R_+^{d_i},\ \norm{r_i}_2=1,
 \ell_i\in[0,1]^{d_i}\ \text{for every }i
\right\}.
\]

\paragraph{State recurrence.}
For a nonzero $r_i$, define the mixed strategy $x_i(r_i)\in\Delta_{d_i}$ by $[x_i(r_i)]_a:=r_i(a)^2/\norm{r_i}_2^2$. Recall that $\ell$ denotes the stacked expected-loss map from \Cref{sec:setup}, and write $\ell(r):=\ell((x_i(r_i))_{i=1}^n)$. For $s=(r,\ell)\in\mathcal S$, define the strategy update $F_i:\mathcal S\times\R\to\R^{d_i}$ and the full state update $G:\mathcal S\times\R\to\mathcal S$ by
\begin{align}
[F_i(r,\ell;\eta)]_a
&:=\frac{\norm{r_i}_2\,r_i(a)
 \exp\{-\eta(2[\ell(r)]_i(a)-[\ell_i]_a)/2\}}
 {\left(\sum_b r_i(b)^2
 \exp\{-\eta(2[\ell(r)]_i(b)-[\ell_i]_b)\}\right)^{1/2}},
\label{eq:analytic-square-root-update}\\
G(r,\ell;\eta)&:=\left(\bigl(F_i(r,\ell;\eta)\bigr)_{i=1}^n,\ell(r)\right).
\label{eq:state-map}
\end{align}
The factor $\norm{r_i}_2$ leaves the update unchanged on $\mathcal S$ and extends it to a neighborhood of the unit spheres. We keep $\eta$ fixed under composition: $G^0(s;\eta)=s$ and $G^{j+1}(s;\eta)=G(G^j(s;\eta);\eta)$. For Optimistic Hedge, setting $r_i^t(a)=\sqrt{x_i^t(a)}$ and $s^t=(r^t,\ell^{t-1})$ gives $s^{t+1}=G(s^t;\eta)$.

\paragraph{Weighted gaps.}
We map the square-root state back to the weighted gaps with $H_{i,0}:\mathcal S\times\R\to\R^{\binom{d_i}{2}}$:
\begin{equation}
[H_{i,0}(r,\ell;\eta)]_{ab}
:=r_i(a)r_i(b)\bigl([\ell_i]_a-[\ell_i]_b\bigr),
\qquad a<b.
\label{eq:exact-observable-main}
\end{equation}
For $\eta\ne0$ and $k\ge0$, define the scaled-difference maps recursively by
\begin{equation}
H_{i,k+1}(s;\eta)
:=\frac{H_{i,k}(G^2(s;\eta);\eta)
      -H_{i,k}(G(s;\eta);\eta)}{\eta}.
\label{eq:scaled-difference-functions}
\end{equation}
Induction along $s^{t+1}=G(s^t;\eta)$ gives $\Delta^k h_i^{t+k}=\eta^k H_{i,k}(s^t;\eta)$ for $k\ge0$ and $t\ge1$.

Notice that a relation $H_{i,m}=\sum_{k<m}c_kH_{i,k}$ immediately yields the finite-order relation \eqref{eq:target-finite-relation} after multiplication by $\eta^m$. A natural question is therefore when such a relation would exist among the maps $H_{i,k}$. The following consequence of the Noetherianity theorem of \cite{Frisch1967} shows that real analyticity of these maps near a compact semianalytic set suffices; see also \cite{Lonsted1973}. Recall that a semianalytic set is locally specified by finitely many real-analytic equalities and inequalities.

\begin{restatable}[Finite relation among analytic maps]{theorem}{finiteDependenceRestated}
\label{thm:finite-dependence}
Let $K\subset\R^q$ be compact and semianalytic, and let $V$ be a finite-dimensional real vector space. For every sequence $\Phi_0,\Phi_1,\ldots$ of $V$-valued maps, each real analytic on a neighborhood of $K$, there exist an integer $m\ge1$, an open neighborhood $U$ of $K$, and scalar real-analytic functions $c_0,\ldots,c_{m-1}$ on $U$ such that the following holds on $U$:
\begin{equation}
\Phi_m=\sum_{k=0}^{m-1}c_k\Phi_k.
\label{eq:finite-dependence}
\end{equation}
\end{restatable}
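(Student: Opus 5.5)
The plan is to reduce the statement to the Noetherian property of the ring of germs of analytic functions along $K$, as established by \cite{Frisch1967}, and then use a standard argument about ascending chains of submodules. Fix a basis $e_1,\ldots,e_p$ of $V$, so that each $\Phi_k$ is a $p$-tuple of real-analytic functions. Let $\mathcal O(K)$ be the ring of germs along $K$ of real-analytic functions, that is, functions analytic on some neighborhood of $K$, with two functions identified when they agree on a neighborhood of $K$. Frisch's theorem states that $\mathcal O(K)$ is Noetherian whenever $K$ is compact and semianalytic. Since a finitely generated module over a Noetherian ring is Noetherian, the free module $\mathcal O(K)^p$ is then Noetherian as well.

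Next, consider the ascending chain of submodules $M_j:=\sum_{k\le j}\mathcal O(K)\,\Phi_k\subset\mathcal O(K)^p$. By the Noetherian property this chain stabilizes, so there is an index $N$ with $M_j=M_N$ for all $j\ge N$. Set $m:=N+1$. Then $\Phi_m\in M_{m-1}$, which means there are germs $c_0,\ldots,c_{m-1}\in\mathcal O(K)$ such that the germ identity $\Phi_m=\sum_{k<m}c_k\Phi_k$ holds. The condition $m\ge1$ holds automatically because $N\ge0$.

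It remains to pass from germs back to honest functions. Choose representatives of $c_0,\ldots,c_{m-1}$, each analytic on some open neighborhood of $K$. Equality of germs means the identity holds on some open neighborhood of $K$. Since only finitely many functions are involved, intersecting the finitely many neighborhoods gives a single open $U\supset K$. On $U$ every $c_k$ is analytic, every $\Phi_k$ with $k\le m$ is analytic, and \eqref{eq:finite-dependence} holds pointwise. The same $m$ works for all later indices, although the statement only requires one relation.

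The main obstacle is entirely the Noetherianity input. The ring of analytic functions on an open set is not Noetherian in general, so working with germs along $K$ is essential. Frisch's theorem requires that $K$ be compact and semianalytic; in his formulation one needs $K$ to have a fundamental system of Stein neighborhoods in the complexification, and for semianalytic $K$ the ring of germs is then Noetherian. Given this input, I would cite \cite{Frisch1967} (and \cite{Lonsted1973} for the real semianalytic case) rather than reprove it. I would also check the one point that needs care, namely that germs along $K$ of real-analytic functions are the same as real parts of germs of holomorphic functions on complex neighborhoods. This holds because any real-analytic function on a neighborhood of a compact set extends holomorphically to some complex neighborhood.
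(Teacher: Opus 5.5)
Your proposal is correct and follows essentially the same route as the paper: both invoke Frisch's theorem for the Noetherianity of $\mathcal O(K)$, deduce that the submodule of $\mathcal O(K)^{\dim V}$ generated by the germs of the $\Phi_k$ is generated by finitely many of them, set $m:=N+1$, and pass from germs to representatives that agree on a common open neighborhood $U\supseteq K$. The differences are cosmetic: you cite the standard fact that $\mathcal O(K)^p$ is a Noetherian module and use the ascending chain condition, whereas the paper proves finite generation of submodules of $R^{d_V}$ by induction on $d_V$ and applies it directly to $\sum_{k\ge0}R[\Phi_k]$.
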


Locally, analytic maps behave like polynomials: after finitely many $\Phi_k$, new maps can be expressed using earlier ones with analytic coefficients. The number of earlier maps needed locally may vary across $K$. Compactness and semianalyticity rule out unbounded growth of this number; we defer the proof to \appref{app:finite-generation}. We verify these hypotheses for $H_{i,k}$ and transfer the relation to $h_i^t$.

\subsection{Proof of the Finite-Order Difference Relation}
\label{sec:finite-relation-proof}

Recall that our goal is to derive the finite-order relation \eqref{eq:target-finite-relation} for the weighted gaps $h_i^t$ in arbitrary games with dimensions $(n,d)$. To make the finite relation produced by \Cref{thm:finite-dependence} hold uniformly over such games, we include the entire loss table in the arguments of our maps. Let
\[
\mathcal G:=[0,1]^{n\prod_{j=1}^n d_j},
\qquad
K_\star:=\mathcal S\times\mathcal G\times\{0\},
\]
where $\mathcal G$ is the set of loss tables. We choose the singleton step size space $\{0\}$ because we only need the relation for sufficiently small step sizes. With $g\in\mathcal G$ fixed, we suppress it throughout this subsection:
\[
F_i(r,\ell;\eta)\equiv F_i(r,\ell;g,\eta),
\qquad
G(s;\eta)\equiv G(s;g,\eta),
\qquad
H_{i,k}(s;\eta)\equiv H_{i,k}(s;g,\eta).
\]
We now verify the hypotheses of \Cref{thm:finite-dependence} for $\Phi_k=H_{i,k}$ and prove the finite-order relation \eqref{eq:target-finite-relation}.

\paragraph{Step 1: analyticity of $G$ and $H_{i,k}$.}
We first study the analyticity of $G$ and its behavior at $\eta=0$.

\begin{restatable}[Analyticity of the state update]{lemma}{analyticStateRepresentationRestated}
\label{lem:analytic-state-representation}
The state update $G(s;\eta)$ is jointly real analytic in the state, the loss-table entries, and $\eta$ on one common neighborhood of \(K_\star\). The state space $\mathcal S$ is invariant. Moreover, $G(r,\ell;0)=(r,\ell(r))$ and $G^2(s;0)=G(s;0)$ throughout this neighborhood.
\end{restatable}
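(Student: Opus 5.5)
The plan is to verify each assertion of \Cref{lem:analytic-state-representation} directly from the explicit formulas \eqref{eq:analytic-square-root-update}--\eqref{eq:state-map}. We view $G$ as a map on an open set of $\R^{\sum_i d_i}\times\R^{\sum_i d_i}\times\R^{n\prod_j d_j}\times\R$ and treat the loss table $g$ as a variable. We then exhibit a single open neighborhood $W$ of $K_\star$ on which every building block is analytic. The key steps below are carried out in the order listed.

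\textbf{Analyticity of $\ell(r)$.} By \eqref{eq:loss-vector}, $[\ell(r)]_i(a)=\sum_{a_{-i}} g_i(a,a_{-i})\prod_{j\ne i} r_j(a_j)^2/\norm{r_j}_2^2$. This is a rational function of $(r,g)$ whose denominators $\norm{r_j}_2^2$ are bounded away from $0$ on $\{\norm{r_j}_2>1/2\}$, so it is analytic there. Note that we must not use the clipped $x_i(r_i)$ or require nonnegativity; the formula is used as a polynomial or rational expression in the entries of $r$, and this is what lets it extend across the boundary faces $r_i(a)=0$.

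\textbf{Analyticity of $F_i$.} The numerator of \eqref{eq:analytic-square-root-update} is a product of $\norm{r_i}_2$, which is analytic where it is nonzero, the coordinate $r_i(a)$, and an exponential of an analytic function. The quantity $D_i:=\sum_b r_i(b)^2\exp\{-\eta(2[\ell(r)]_i(b)-[\ell_i]_b)\}$ is analytic. At any point of $K_\star$ we have $\eta=0$, so $D_i=\norm{r_i}_2^2=1$. By continuity and compactness of $K_\star$, there is a uniform neighborhood on which $D_i>1/2$, and on it $D_i^{-1/2}$ is analytic. This step is the main point to be careful about: one must pick a single $W$ working for all $i$, all states and all tables simultaneously. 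This follows from compactness of $K_\star$ together with continuity of $D_i$ and $\norm{r_i}_2$, for instance by taking a tube $\{\operatorname{dist}(\cdot,K_\star)<\delta\}$ with $\delta$ small. Compositions and products of analytic maps are analytic, so $G$ is jointly analytic on $W$.

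\textbf{Invariance of $\mathcal S$.} Take $s\in\mathcal S$, $g\in\mathcal G$ and $\eta\in\R$. Since $r_i\ge0$, we get $F_i\ge0$ entrywise. We also have $\norm{F_i}_2^2=\norm{r_i}_2^2\cdot D_i/D_i=1$. Finally, $\ell(r)\in[0,1]^{d_i}$ because it is a convex combination of entries of $g\in[0,1]$, with weights $r_j(a_j)^2/\norm{r_j}^2$ forming probability vectors. Hence $G(s)\in\mathcal S$. If $\eta$ ranges over a neighborhood of $0$, this holds for those $\eta$ as well.

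\textbf{The $\eta=0$ identities.} At $\eta=0$ the exponentials equal $1$, so $D_i=\norm{r_i}_2^2$ and $F_i=\norm{r_i}_2 r_i/\norm{r_i}_2=r_i$ on $W$. Hence $G(r,\ell;0)=(r,\ell(r))$. Applying this again gives $G^2(s;0)=G(r,\ell(r);0)=(r,\ell(r))=G(s;0)$, because the second component of $G$ depends only on $r$. The one thing to check is that $G(s;0)$ remains in $W$, so that the composition is defined. This holds since $G(\cdot;0)$ preserves $r$ and maps $g$-admissible points to points with the same $r$ and $g$. If needed, we shrink $W$ to a product-type neighborhood, namely $r$ near the nonnegative spheres, $\ell$ in an arbitrary bounded box, $g$ near $\mathcal G$ and $|\eta|<\delta$. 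Such a $W$ is stable under replacing $\ell$ by $\ell(r)$, since $\ell(r)$ stays bounded.
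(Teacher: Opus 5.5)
Your proposal is correct and follows essentially the same route as the paper: $\ell_g(r)$ is rational with denominators $\norm{r_j}_2^2$ bounded away from zero near $\mathcal R$, and the square-root denominator is uniformly positive near $K_\star$ by continuity and compactness. Invariance then follows from $\norm{F_i}_2=\norm{r_i}_2$, nonnegativity, and the convex-combination form of $\ell_g(r)$; the zero-step identities follow from $F_i(r,\ell;g,0)=r_i$, and your explicit check that a product-type neighborhood is stable under $(r,\ell)\mapsto(r,\ell_g(r))$ (so that $G^2(\cdot;0)$ is defined on it) makes precise a point the paper leaves implicit.
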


At $\eta=0$, $G$ only replaces the stored losses by $\ell(r)$, so applying it twice changes nothing further. Thus the zero-step identity makes the numerator in \Cref{eq:scaled-difference-functions} vanish at $\eta=0$. By \Cref{lem:analytic-state-representation}, the numerator is analytic whenever $H_{i,k}$ is, so the quotient extends analytically through $\eta=0$. Thus the recursion preserves analyticity from the polynomial base $H_{i,0}$, giving \Cref{lem:scaled-difference-analytic}.

\begin{restatable}[Analyticity of scaled finite differences]{lemma}{scaledDifferenceAnalyticRestated}
\label{lem:scaled-difference-analytic}
For every fixed $k\ge0$, the function $H_{i,k}$ in \Cref{eq:scaled-difference-functions} extends real analytically through $\eta=0$, jointly in the state and loss-table entries, on a neighborhood of \(K_\star\). The neighborhood may depend on $k$, but not on the state or game.
\end{restatable}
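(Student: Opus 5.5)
We argue by induction on $k$, carrying the following stronger hypothesis. For each $k$ there is an open neighborhood $U_k$ of $K_\star$ on which $H_{i,k}$, defined by \eqref{eq:scaled-difference-functions} for $\eta\ne0$, extends to a function real analytic jointly in $(s,g,\eta)$. In addition, we require $U_{k+1}\subseteq U_k$ and $G(U_{k+1})\cup G^2(U_{k+1})\subseteq U_k$. The base case $k=0$ is immediate. The map $H_{i,0}$ in \eqref{eq:exact-observable-main} is a polynomial in the coordinates of $(r,\ell)$ and does not involve $g$ or $\eta$, so we may take $U_0$ to be the common neighborhood $N$ of \Cref{lem:analytic-state-representation}.

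For the inductive step we first handle the domain. By \Cref{lem:analytic-state-representation}, $G$ is analytic, hence continuous, on $N$, and it maps $\mathcal S\times\mathcal G\times\{0\}$ into $\mathcal S\times\mathcal G\times\{0\}\subseteq K_\star\subset U_k$. Here we view $G$ as acting on $(s,g,\eta)$ with $g$ and $\eta$ carried along. The same holds for $G^2$. The preimage $W:=\{z\in N: G(z)\in U_k,\ G^2(z)\in U_k\}\cap U_k$ is therefore an open set containing $K_\star$. Compactness of $K_\star$ then lets us shrink $W$ to a neighborhood with uniform width, which is not needed for analyticity itself.

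On $W$ we define the numerator $D(s,g,\eta):=H_{i,k}(G^2(s;g,\eta);g,\eta)-H_{i,k}(G(s;g,\eta);g,\eta)$. It is real analytic because compositions of real-analytic maps are real analytic. By the zero-step identity $G^2(s;0)=G(s;0)$ from \Cref{lem:analytic-state-representation}, $D$ vanishes identically on $W\cap\{\eta=0\}$.

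The key step is dividing by $\eta$. Near any point $z_0=(s_0,g_0,0)$ we expand $D$ as a convergent power series in the local coordinates. Since $D(\cdot,\cdot,0)\equiv0$ on a neighborhood of $z_0$ inside the hyperplane $\eta=0$, every monomial with $\eta$-degree zero has vanishing coefficient. Hence $D=\eta\,\widetilde D$, where $\widetilde D$ is given by a power series converging on the same polydisc, so $\widetilde D$ is analytic there. Equivalently, we can write $\widetilde D(s,g,\eta)=\int_0^1\partial_\eta D(s,g,\theta\eta)\,\dd\theta$, which is manifestly analytic wherever the segment from $\eta=0$ to $\eta$ stays in $W$. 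We therefore shrink $W$ to an open set $U_{k+1}\supseteq K_\star$ that is convex in the $\eta$-direction, for instance $\{(s,g,\eta):(s,g,\theta\eta)\in W\ \forall\theta\in[0,1]\}$ intersected with a product neighborhood, which is still open by compactness of $K_\star$.

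On $U_{k+1}$, the function $\widetilde D$ agrees with $H_{i,k+1}$ wherever $\eta\ne0$, so it is the desired analytic extension. Uniqueness of analytic continuation makes the extension unambiguous. The neighborhood $U_{k+1}$ is built only from $K_\star$ and from $N$, which is game-independent because $g$ is a coordinate. It therefore depends on $k$ but not on the particular state or game, which completes the induction.

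The main obstacle is this bookkeeping of neighborhoods. We must ensure that $G$ and $G^2$ map the shrunken neighborhood back into the domain where $H_{i,k}$ is already analytic, and that the division by $\eta$ is legitimate on a set that is connected along $\eta$-segments. Both points follow from continuity of $G$, the invariance $G(K_\star)\subseteq K_\star$, and the tube-lemma compactness argument.
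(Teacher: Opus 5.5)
Your proposal is correct and follows essentially the same route as the paper: induction on $k$, vanishing of the numerator $N_{i,k}=H_{i,k}(G^2(s;g,\eta);g,\eta)-H_{i,k}(G(s;g,\eta);g,\eta)$ at $\eta=0$ via the zero-step identity $G^2(\cdot;0)=G(\cdot;0)$, and division by $\eta$ through $N_{i,k}=\eta\int_0^1\partial_\eta N_{i,k}(s;g,\tau\eta)\,\dd\tau$ on a shrunken neighborhood of $K_\star$. Your nested neighborhoods with $G(U_{k+1})\cup G^2(U_{k+1})\subseteq U_k$ simply make explicit the paper's remark that it shrinks the neighborhood ``only when a composition requires it.''
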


\paragraph{Step 2: finite generation and uniform coefficient bounds.}
Now that we have established the analyticity of $H_{i,k}$, we can apply \Cref{thm:finite-dependence} to $H_{i,0},H_{i,1},\ldots$. For some finite $m$, \Cref{thm:finite-dependence} gives analytic coefficient functions $c_0,\ldots,c_{m-1}$ on a common neighborhood of $K_\star$ such that
\begin{equation}
H_{i,m}=\sum_{k=0}^{m-1}c_kH_{i,k}.
\label{eq:finite-state-formula-main}
\end{equation}
Evaluating \eqref{eq:finite-state-formula-main} along the Optimistic Hedge trajectory gives the finite-order relation we have sought:
\[
\Delta^m h_i^{t+m}
=\eta^m H_{i,m}(s^t;\eta)
=\sum_{k=0}^{m-1}c_k(s^t;\eta)\eta^{m-k}\Delta^k h_i^{t+k}.
\]
Compactness of $K_\star$ puts $\mathcal S\times\mathcal G\times[0,\eta_i]$ inside the common neighborhood for some $\eta_i>0$ depending on $(n,d)$. Continuity bounds $c_k$ and $H_{i,k}$ on this compact set; invariance of $\mathcal S$ makes these bounds valid at every round and produces the following lemma.

\begin{restatable}[Exact finite-order relation]{lemma}{finiteGapRelationRestated}
\label{lem:finite-gap-relation}
Fix the number of players $n$, number of actions $d=(d_1,\ldots,d_n)$, and a player $i$. There are an integer $m\ge1$, constants $A,M<\infty$, and $\eta_i>0$, uniform over all games of these dimensions, such that, for every such game, $0<\eta\le\eta_i$, and $t\ge1$, there exist scalars $c_{0,t},\ldots,c_{m-1,t}$, possibly depending on the game, $\eta$, and $t$, satisfying
\begin{align}
\Delta^m h_i^{t+m}
&=\sum_{k=0}^{m-1}c_{k,t}\eta^{m-k}\Delta^kh_i^{t+k},
&|c_{k,t}|&\le A,
\label{eq:exact-relation-main}\\*
\norm{\Delta^kh_i^{t+k}}_2
&\le M\eta^k,
&0\le k&\le m.
\label{eq:scaled-difference-size-main}
\end{align}
\end{restatable}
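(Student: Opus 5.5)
The plan is to apply \Cref{thm:finite-dependence} to the sequence $\Phi_k:=H_{i,k}$, viewed as maps of the enlarged variable $(s,g,\eta)$ in $\R^q$ with $q=2\sum_jd_j+n\prod_jd_j+1$. The compact set is $K:=K_\star=\mathcal S\times\mathcal G\times\{0\}$.

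First, $K_\star$ is compact and semianalytic. The set $\mathcal S$ is cut out by the polynomial equalities $\norm{r_i}_2^2=1$ and the linear inequalities $r_i(a)\ge0$, $0\le[\ell_i]_a\le1$. The set $\mathcal G$ is a box, and $\{0\}$ is given by $\eta=0$. Next, \Cref{lem:scaled-difference-analytic} makes each $H_{i,k}$ real analytic on a neighborhood $U_k$ of $K_\star$. These neighborhoods may shrink with $k$, which is a genuine subtlety, because \Cref{thm:finite-dependence} asks for each $\Phi_k$ to be analytic on some neighborhood of $K$, not a common one. That is exactly what its hypothesis allows, so it applies directly. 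It yields $m\ge1$, an open $U\supset K_\star$, and analytic $c_0,\ldots,c_{m-1}$ on $U$ with $H_{i,m}=\sum_{k<m}c_kH_{i,k}$ on $U$. We may shrink $U$ to lie inside $U_0\cap\cdots\cap U_m$ and inside the neighborhood from \Cref{lem:analytic-state-representation}.

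Second, I would uniformize. Since $K_\star$ is compact and $U$ is open, a tube-lemma argument gives $\eta_i>0$ and a compact set $L:=\mathcal S\times\mathcal G\times[0,\eta_i]\subset U$. Here one uses that $\mathcal S\times\mathcal G$ is compact, so the distance from $K_\star$ to $U^c$ is positive. The $c_k$ and $H_{i,k}$ for $k\le m$ are continuous on $L$, so we may set
\[
A:=\max_{k<m}\sup_L|c_k|<\infty,\qquad M:=\max_{k\le m}\sup_L\norm{H_{i,k}}_2<\infty .
\]
These depend only on $(n,d,i)$, since $L$ does not depend on the game.

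Third, I would transfer the relation to the trajectory. By the invariance of $\mathcal S$ in \Cref{lem:analytic-state-representation}, with $s^1=(r^1,\ell^0)=(r^1,0)\in\mathcal S$, every $s^t$ lies in $\mathcal S$. Hence $(s^t,g,\eta)\in L$ for all $t\ge1$ and $0<\eta\le\eta_i$. For $\eta\neq0$ the analytic extension of $H_{i,k}$ agrees with the recursive definition \Cref{eq:scaled-difference-functions}. Induction on $k$ along $s^{t+1}=G(s^t;\eta)$ then gives the identity $\Delta^kh_i^{t+k}=\eta^kH_{i,k}(s^t;\eta)$. The inductive step uses
\[
\Delta^{k+1}h_i^{t+k+1}=\Delta^kh_i^{t+k+2}-\Delta^kh_i^{t+k+1},
\]
where the two terms correspond to $s^{t+2}=G^2(s^t)$ and $s^{t+1}=G(s^t)$. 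This index bookkeeping should be checked once carefully. Setting $c_{k,t}:=c_k(s^t,g,\eta)$ and multiplying the analytic relation by $\eta^m$ gives \eqref{eq:exact-relation-main} with $|c_{k,t}|\le A$. The size bound $\norm{\Delta^kh_i^{t+k}}_2=\eta^k\norm{H_{i,k}(s^t;\eta)}_2\le M\eta^k$ gives \eqref{eq:scaled-difference-size-main}.

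The main obstacle is the interplay of neighborhoods rather than any computation. One must make sure that analyticity through $\eta=0$ holds on a neighborhood of the whole compact $K_\star$, including the boundary faces where $r_i(a)=0$ and the loss entries sit at the box edges. One must also make sure that the single $m$ and $U$ from \Cref{thm:finite-dependence} are uniform in the loss table $g$. Including $g$ among the variables is what delivers this uniformity, so the step to check most carefully is that the hypotheses of \Cref{thm:finite-dependence} hold with $g$ treated as a coordinate.
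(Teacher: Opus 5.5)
Your proposal is correct and follows the paper's proof essentially step for step. You check that $K_\star$ is compact and semianalytic, apply \Cref{thm:finite-dependence} to $(H_{i,k})_{k\ge0}$ with the $k$-dependent neighborhoods from \Cref{lem:scaled-difference-analytic}, extract $\eta_i$ from compactness of $\mathcal S\times\mathcal G$, bound $c_k$ and $H_{i,k}$ by continuity on $\mathcal S\times\mathcal G\times[0,\eta_i]$, and transfer the relation to the trajectory via invariance of $\mathcal S$ (starting from $s^1=(r^1,0)\in\mathcal S$, which the paper leaves implicit) and the identity $\Delta^kh_i^{t+k}=\eta^kH_{i,k}(s^t;\eta)$.
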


\Cref{lem:finite-gap-relation} gives the precise finite-order relation \eqref{eq:target-finite-relation} that we needed in \Cref{sec:high-order-reduction}. In particular, it provides a uniform bound on the coefficients $c_{k,t}$ and does not include any error term that accumulates over the horizon $T$. The exact order $m$ and constants $A,M$ remain implicit due to the nonconstructive nature of \Cref{thm:finite-dependence}. We defer the full proof to \appref{app:finite-generation}.

\subsection{From the Finite-Order Relation to the Regret Bound}
\label{sec:finite-horizon}

With the exact finite-order relation from \Cref{lem:finite-gap-relation} in hand, one final step unlocks \Cref{lem:fourier-tools}: a finitely supported cutoff of $h_i^t$ with boundary error independent of $T$. The next lemma supplies this cutoff construction to complete the argument in \Cref{sec:high-order-reduction}; we defer the proof to \appref{app:technical}.

\begin{restatable}[Finite-support cutoff]{lemma}{finiteSupportCutoffRestated}
\label{lem:exact-boundary-extension}
Fix an integer $m\ge1$ and $A,M<\infty$. For $0<\eta<1/m$, suppose a sequence $(u^t)_{t\ge1}$ in a finite-dimensional Euclidean space satisfies, for every $t\ge1$,
\begin{align}
\Delta^m u^{t+m}
&=\sum_{k=0}^{m-1}c_{k,t}\eta^{m-k}\Delta^k u^{t+k},
\qquad |c_{k,t}|\le A\quad(0\le k<m),
\label{eq:exact-boundary-recurrence-main}\\
\norm{\Delta^k u^{t+k}}
&\le M\eta^k\qquad(0\le k\le m).
\label{eq:exact-boundary-assumption-main}
\end{align}
There is a constant $C<\infty$, depending only on $m,A,M$, such that every horizon $T\ge1$ admits a finitely supported sequence $y=(y^t)_{t\in\mathbb Z}$ with
\begin{align}
\norm{\Delta y}_{\ell_2}^2
&\ge\sum\nolimits_{t=1}^{T}\norm{\Delta u^t}^2-C,
\label{eq:exact-boundary-return-main}\\
\norm y_{\ell_2}^2
&\le\sum\nolimits_{t=1}^{T+1}\norm{u^t}^2,
\label{eq:cutoff-sequence-norm}\\
\norm{\Delta^m y}_{\ell_2}
&\le A\sum\nolimits_{k=0}^{m-1}\eta^{m-k}\norm{\Delta^k y}_{\ell_2}
 +C\eta^{m-1/2}.
\label{eq:exact-norm-recurrence-main}
\end{align}
\end{restatable}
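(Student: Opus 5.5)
The plan is to take $y$ to be a smooth, slowly varying cutoff of $u$: set $y^t:=\varphi^t u^t$ for $1\le t\le T+1$ and $y^t:=0$ otherwise. Here $0\le\varphi^t\le1$ is a fixed discrete window, equal to $0$ near both ends of $[1,T+1]$, equal to $1$ in the interior, with linear-type ramps of length $L:=\lceil 1/\eta\rceil$ at each end. The hypothesis $\eta<1/m$ makes $L>m$. Both this hypothesis and $L$ enter the constant $C$ only through $m,A,M$. A window whose discrete differences satisfy $|\Delta^j\varphi^t|\lesssim_m L^{-j}\lesssim\eta^j$ for $0\le j\le m$ is obtained by taking $\varphi^t=\psi((t-1)/L)$ on the left ramp and symmetrically on the right, with $\psi\in C^\infty$ a fixed transition function. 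When $T\le 2L+2m$ there is no room for the window, and I take $y\equiv0$. In that case \eqref{eq:cutoff-sequence-norm} and \eqref{eq:exact-norm-recurrence-main} are trivial. \eqref{eq:exact-boundary-return-main} follows from \eqref{eq:exact-boundary-assumption-main} with $k=1$: $\sum_{t\le T}\norm{\Delta u^t}^2\le T M^2\eta^2\lesssim (1/\eta)M^2\eta^2\le M^2$.

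\textbf{Step 1 (norm bound).} Since $0\le\varphi\le1$ and $y$ is supported in $[1,T+1]$, the bound \eqref{eq:cutoff-sequence-norm} is immediate.

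\textbf{Step 2 (first differences).} By the product rule, $\Delta y^t=\varphi^{t+1}\Delta u^t+(\Delta\varphi^t)u^t$. Expanding the square gives
\[
\norm{\Delta y}_{\ell_2}^2\ge\sum_t(\varphi^{t+1})^2\norm{\Delta u^t}^2-2\sum_t|\Delta\varphi^t|\,\norm{u^t}\,\norm{\Delta u^t}.
\]
The cross term is supported on the two ramps, so it has $O(L)$ terms, each of size at most $\eta\cdot M\cdot M\eta$; in total it is $O(M^2\eta)$. The deficit $\sum_t(1-(\varphi^{t+1})^2)\norm{\Delta u^t}^2$ is likewise confined to $O(L)$ indices near the ends, each contributing at most $M^2\eta^2$, so it is $O(M^2)$. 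Together these give \eqref{eq:exact-boundary-return-main}.

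\textbf{Step 3 (order-$m$ relation), the main obstacle.} I expect this step to be the hardest. Two things must be checked: the cutoff must respect the recurrence up to an $\ell_2$ error of order $\eta^{m-1/2}$, not $O(1)$, and the index shifts must line up. The discrete Leibniz rule gives
\[
\Delta^k(\varphi u)^{t}=\sum_{j=0}^k\tbinom kj(\Delta^j\varphi^{t})\,\Delta^{k-j}u^{t+j}.
\]
Hence $\Delta^k y^{t}=\varphi^{t+k}\Delta^k u^{t}+R_k^t$, where $R_k^t$ collects the terms with $j\ge1$. I need to be careful with the shift convention, writing $\Delta^k u^t$ against the hypothesis form $\Delta^k u^{(t-k)+k}$. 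Each such term is bounded by $\eta^j\cdot M\eta^{k-j}=M\eta^k$ and vanishes outside the two ramps plus $O(m)$ boundary indices. At the outer boundary indices, where the support of $y$ begins, the same estimate holds because $\varphi$ and its differences are already $O(\eta^m)$-flat there. I then apply \eqref{eq:exact-boundary-recurrence-main} at the appropriate index to $\Delta^m u$, multiply by $\varphi$, and replace each $\varphi\,\Delta^k u$ by $\Delta^k y-R_k$. The small mismatch between $\varphi^{t+m}$ and $\varphi^{t+k}$ is itself $O(\eta^{m-k})$ times a bounded factor and is absorbed in the same way. This yields, pointwise,
\[
\Delta^m y=\sum_{k<m}c_k\eta^{m-k}\Delta^k y+E,\qquad \norm{E^t}\lesssim_{m,A,M}\eta^m,
\]
with $E$ supported on $O(L)=O(1/\eta)$ indices. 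Therefore $\norm{E}_{\ell_2}\lesssim\eta^m\sqrt{1/\eta}=\eta^{m-1/2}$. Taking $\ell_2$ norms and using $|c_{k,t}|\le A$ together with the triangle inequality gives \eqref{eq:exact-norm-recurrence-main}.

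The $\eta^{m-1/2}$ exponent is exactly the balance between pointwise size $\eta^m$ and ramp length $\eta^{-1}$. A sharp cutoff ($L=1$) would instead produce an $O(1)$ error in $\Delta^m y$, which is why a ramp of length $\asymp1/\eta$ is the natural choice.
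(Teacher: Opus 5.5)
Your construction is essentially the paper's: a smooth window with ramps of length $L=\lceil1/\eta\rceil$, $y\equiv0$ for $T\lesssim1/\eta$, and a residual of pointwise size $O(\eta^m)$ on $O(1/\eta)$ indices, which gives $\eta^{m-1/2}$ in $\ell_2$.

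The only real difference is at the left edge. The paper inserts a buffer of $3m$ zeros before the first ramp. Then every window $\{t,\ldots,t+2m\}$ meeting a ramp has $t\ge m+1$, and every difference of $u$ in the Leibniz expansion is covered by \eqref{eq:exact-boundary-assumption-main}. The paper then bounds each term on such windows directly by $C\eta^m$, with the coefficients switched off there, rather than carrying the recurrence through the expansion as you do. You start the ramp at $t=1$ and instead use that $\psi\equiv0$ on $(-\infty,0]$ forces $\psi(s)=O(s^m)$, so $y^t=O(\eta^m)$ for $t\le2m+1$. That is valid, but it relies on this flatness, which a literally linear ramp (as your opening sentence suggests) would not have.

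There is one index slip. With $k=1$, \eqref{eq:exact-boundary-assumption-main} bounds $\Delta u^{s}$ only for $s\ge2$. Nothing gives $\norm{\Delta u^1}\le M\eta$, and for $u=h_i$ it is false, since $h_i^1=0$ while $h_i^2$ is of order one. This is the ``exceptional initial first-order difference'' the paper treats separately. As a result, two of your claims fail at $t=1$: the short-horizon estimate $\sum_{t\le T}\norm{\Delta u^t}^2\le TM^2\eta^2$, and the Step~2 claim that each deficit index costs at most $M^2\eta^2$. The repair is one line: the $k=0$ case gives $\norm{\Delta u^1}\le\norm{u^2}+\norm{u^1}\le2M$. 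This only adds $4M^2$ to $C$, so \eqref{eq:exact-boundary-return-main} still holds.
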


This finite-support construction is similar to that of \citet{Ha2026LogHedge}, but is tailored to preserve the finite-order relation up to a horizon-independent error. The inequalities \eqref{eq:exact-boundary-return-main}--\eqref{eq:cutoff-sequence-norm} let us transfer a first-order difference bound for $y$ back to $h_i^t$ only at a constant cost.

\begin{proof}[Proof of \Cref{lem:constant-temporal}]
We combine the finite-order relation in \Cref{lem:finite-gap-relation}, the cutoff bound \eqref{eq:exact-norm-recurrence-main} from \Cref{lem:exact-boundary-extension}, and the finite-difference interpolation in \Cref{lem:fourier-tools} to bound $\norm{\Delta y}_{\ell_2}$. We then use \eqref{eq:exact-boundary-return-main} and \eqref{eq:cutoff-sequence-norm} from \Cref{lem:exact-boundary-extension} to transfer the bound to $h_i^t$ with error independent of $T$.

Fix $i,T$ and take $m,A,M,\eta_i$ from \Cref{lem:finite-gap-relation}. For $0<\eta\le\bar\eta_i:=\frac12\min\{\eta_i,1/m,1\}$, apply \Cref{lem:exact-boundary-extension} to $(h_i^t)_{t\ge1}$ to obtain the cutoff $y$, and write $C_1\ge0$ for its constant $C$.

The cutoff bound \eqref{eq:exact-norm-recurrence-main} in \Cref{lem:exact-boundary-extension} and the finite-difference interpolation \eqref{eq:difference-interpolation} in \Cref{lem:fourier-tools} give:
\begin{align}
\norm{\Delta^m y}_{\ell_2}
&\overset{\mathclap{\text{\eqref{eq:exact-norm-recurrence-main}}}}{\le}
 A\sum_{k=0}^{m-1}\eta^{m-k}\norm{\Delta^k y}_{\ell_2}
 +C_1\eta^{m-1/2}\notag\\*
&\overset{\mathclap{\text{\eqref{eq:difference-interpolation}}}}{\le}
 A\eta^m\norm y_{\ell_2}
 +A\sum_{k=1}^{m-1}\bigl(\eta^m\norm y_{\ell_2}\bigr)^{1-k/m}
   \norm{\Delta^m y}_{\ell_2}^{k/m}
 +C_1\eta^{m-1/2}.
\label{eq:cutoff-interpolation}
\end{align}

For each term in the sum, Young's inequality gives
\begin{equation}
A\bigl(\eta^m\norm y_{\ell_2}\bigr)^{1-k/m}\norm{\Delta^m y}_{\ell_2}^{k/m}
\le \frac{1}{2m}\norm{\Delta^m y}_{\ell_2}
 +\frac{m-k}{m}(2k)^{\frac{k}{m-k}}A^{\frac{m}{m-k}}\eta^m\norm y_{\ell_2}.
\label{eq:young-absorption}
\end{equation}
Substituting \eqref{eq:young-absorption} into \eqref{eq:cutoff-interpolation} gives a coefficient $(m-1)/(2m)\le1/2$ on $\norm{\Delta^m y}_{\ell_2}$. Moving this term to the left yields the following for some constant $C_2\ge0$ depending only on $m,A,M$:
\begin{equation}
\norm{\Delta^m y}_{\ell_2}
\le C_2\eta^m\norm y_{\ell_2}+C_2\eta^{m-1/2}.
\label{eq:highest-difference-bound}
\end{equation}

\emph{Case $m=1$.} \Cref{eq:highest-difference-bound} already bounds $\norm{\Delta y}_{\ell_2}$. Squaring and using $(a+b)^2\le2a^2+2b^2$ gives
\[
\norm{\Delta y}_{\ell_2}^2
\le 2C_2^2\eta^2\norm y_{\ell_2}^2+2C_2^2\eta.
\]

\emph{Case $m\ge2$.} Invoking \Cref{lem:fourier-tools} with $j=1$ and combining it with \eqref{eq:highest-difference-bound} gives:
\begin{align}
\norm{\Delta y}_{\ell_2}^2
&\overset{\mathclap{\text{\eqref{eq:difference-interpolation}}}}{\le}
 \norm y_{\ell_2}^{2-2/m}\norm{\Delta^m y}_{\ell_2}^{2/m}
 \overset{\mathclap{\text{\eqref{eq:highest-difference-bound}}}}{\le}
 C_2^{2/m}\norm y_{\ell_2}^{2-2/m}
 \bigl(\eta^m\norm y_{\ell_2}+\eta^{m-1/2}\bigr)^{2/m}\notag\\*
&\le
 C_2^{2/m}\eta^2\norm y_{\ell_2}^2
 +C_2^{2/m}\bigl(\eta^2\norm y_{\ell_2}^2\bigr)^{1-1/m}\eta^{1/m}\notag\\*
&\le C_2^{2/m}\bigl((2-1/m)\eta^2\norm y_{\ell_2}^2+\eta/m\bigr)\notag\\*
&\le 2C_2^{2/m}\bigl(\eta^2\norm y_{\ell_2}^2+\eta\bigr).
\label{eq:first-difference-closure}
\end{align}
The third inequality uses $(a+b)^{2/m}\le a^{2/m}+b^{2/m},\ \forall m\ge2$. The fourth applies Young's inequality $(\eta^2\norm y_{\ell_2}^2)^{1-1/m}\eta^{1/m}\le(1-1/m)\eta^2\norm y_{\ell_2}^2+\eta/m$.

We now return to $h_i^t$ using \eqref{eq:exact-boundary-return-main} and \eqref{eq:cutoff-sequence-norm}. Either case and $\eta\le1$ give
\[
\sum_{t=1}^T\norm{h_i^{t+1}-h_i^t}_2^2
\le\norm{\Delta y}_{\ell_2}^2+C_1
\le 2C_2^{2/m}\eta^2\sum_{t=1}^{T+1}\norm{h_i^t}_2^2+C_1+2C_2^{2/m}.
\]
The coefficients depend only on $(n,d)$ and $i$, through $m,A,M$. Taking common upper bounds over $i$ and $\eta_{\rm temp}:=\min_i\bar\eta_i$ proves \Cref{lem:constant-temporal}.
\end{proof}

\paragraph{Completing the regret bound.}
Combining \Cref{lem:regret-simple}, \Cref{lem:weighted-gap}, and \Cref{lem:constant-temporal} directly proves \Cref{thm:constant}. We give a detailed calculation in \appref{app:main-proof}.

\section{Discussion}
\label{sec:discussion}

\paragraph{Implications.}
Notably, a recent line of work has obtained horizon-independent regret bounds in finite general-sum games by combining modified regularization with higher-order prediction \citep{LiuFarinaOzdaglar2026,AbbadiLarakiMertikopoulos2026}. However, \Cref{thm:constant} shows that plain Optimistic Hedge can already attain this horizon independence with a dimension-dependent constant step size. Our result suggests that adaptive learning rates \citep{SoleymaniEtAl2025,SoleymaniEtAl2025Fast}, hybrid regularization \citep{Tsuchiya2026Hybrid}, and higher-order predictions \citep{LiuFarinaOzdaglar2026,AbbadiLarakiMertikopoulos2026} are unnecessary for this guarantee.

\paragraph{Comparison with \cite{Ha2026LogHedge}.}
Both our analysis and the analysis of \cite{Ha2026LogHedge} build on the high-order smoothness framework of \cite{DFG2021}. Ha's centered-logit derivative bounds control each difference order separately. Choosing $m\asymp(\sqrt n\eta)^{-1}$ leaves a regret contribution proportional to $\eta T\exp\{-c/(\sqrt n\eta)\}$, leading to a logarithmic regret bound when the step size is tuned to the horizon. In our analysis, Frisch's theorem on the compact state space yields an exact relation among difference orders (\Cref{lem:finite-gap-relation}). Its order is fixed by $(n,d)$, and it has no additive error that accumulates over $T$. Finite-difference interpolation (\Cref{lem:fourier-tools}) then yields a first-order difference bound that lets us absorb prediction error into stability, up to a horizon-independent error. Intuitively, our approach thus trades Ha's explicit dimension dependence for horizon independence.

\section{Conclusion}
In this work, we proved that plain Optimistic Hedge with a constant step size can achieve horizon-independent individual regret in finite general-sum games. An exact finite-order relation for weighted gaps removes the horizon-accumulating remainder in the logarithmic analysis of \cite{Ha2026LogHedge}, and yields an $O_{n,d}(1/T)$ CCE guarantee. Quantifying the explicit dimension dependence of $\eta_\star$ and $C_\star$ in our regret bound remains open, and we leave it as an intriguing future work.

\clearpage
\subsection*{AI use statement}
The author formulated the initial horizon-independent regret hypothesis and conducted a numerical experiment to test the hypothesis empirically. The hypothesis was motivated by the close connection between optimistic and implicit update rules \citep{MokhtariOzdaglarPattathil2020}. Specifically, the authors hypothesized that a similar connection might exist between Optimistic Hedge and Clairvoyant Multiplicative Weights Update (CMWU) \citep{PiliourasEtAl2022}, and therefore, Optimistic Hedge might also enjoy horizon-independent regret bounds. Under the authors' direction, generative AI tools were used to develop and write proof arguments based on the high-order smoothness framework of \cite{DFG2021} and the weighted-gap reduction of \cite{Ha2026LogHedge}. AI tools also assisted with literature search, exposition, and LaTeX editing. The author takes responsibility for the final content, including all mathematical claims, proofs, and references.

\subsection*{Reproducibility statement}
The assumptions and algorithm are specified in \Cref{sec:setup,sec:main-results}. Complete proofs appear in \Cref{sec:analysis} and Appendices~\ref{app:main-proof}--\ref{app:technical}.

\bibliography{references}

\clearpage
\appendix
\etocdepthtag.toc{appendix}

\begingroup
\renewcommand{\contentsname}{Appendix Contents}
\etocsettagdepth{main}{-3}
\etocsettagdepth{appendix}{subsection}
\etocsetnexttocdepth{subsection}
\tableofcontents
\endgroup

\section{Proof of the main result}
\label{app:main-proof}

We derive the horizon-independent regret guarantee from the three estimates in Section~5.1.

\printrepeatedstatement{thm-constant}{\horizonRegretRestated}

\begin{proof}[Proof of \Cref{thm:constant}]
Let $\eta_{\rm rvu},\eta_{\rm gap},\eta_{\rm temp}$ be the thresholds in \Cref{lem:regret-simple,lem:weighted-gap,lem:constant-temporal}, with $\eta_{\rm temp}\le1$. Since $\norm{h_i^{T+1}}_2^2\le1/4$, the last two lemmas give
\begin{equation}
\sum_{t=1}^T\Var_{x_i^t}(\ell_i^t-\ell_i^{t-1})
\le C_0\eta^2\sum_{t=1}^T\Var_{x_i^t}(\ell_i^{t-1})+C_1,
\label{eq:constant-variance}
\end{equation}
with constants $C_0,C_1\ge0$ depending only on $n,d$. For $0<\eta\le\min\{\eta_{\rm rvu},\eta_{\rm gap},\eta_{\rm temp}\}$, substitution into \Cref{eq:regret-simple} yields
\[
\Reg_i(T)
\le\frac{\log d_i}{\eta}
 +\frac{2C_1\eta}{3}
 -\frac{\eta}{3}(1-2C_0\eta^2)\sum_{t=1}^T\Var_{x_i^t}(\ell_i^{t-1}).
\]
Set
\[
\eta_\star:=\min\!\left\{
\eta_{\rm rvu},\eta_{\rm gap},\eta_{\rm temp},
\frac{1}{\sqrt{2\max\{C_0,1\}}}
\right\},
\qquad
C_\star:=\frac{2C_1}{3}.
\]
The variance coefficient is nonpositive for every $0<\eta\le\eta_\star$, proving \Cref{eq:constant-regret}. Both constants depend only on $n,d$.
\end{proof}

\printrepeatedstatement{cor-cce}{\cceRestated}

\begin{proof}
Apply \Cref{eq:cce-generic} to \Cref{thm:constant}: for every $T\ge1$,
\[
\operatorname{Gap}_{\rm CCE}(\widehat\mu_T)
=\frac1T\max_{i\in[n]}\Reg_i(T)
\le\frac1T\max_{i\in[n]}\left\{\frac{\log d_i}{\eta}+C_\star\eta\right\}.
\]
This proves \Cref{eq:cce-rate}.
\end{proof}

\section{Preliminary lemmas}
\label{app:common}

We prove the optimistic regret bound, the weighted-gap reduction, and finite-difference interpolation. For the first two, we only use the update rule of Optimistic Hedge and boundedness of loss vectors; we use the analytic structure of self-play in \appref{app:exact}.

\subsection{Optimistic regret bound}

Fix a player $i$, write $d:=d_i$, and fix $\eta>0$. Define the soft-min potential $\Phi:\R^d\to\R$ by
\[
\Phi(z):=-\frac1\eta\log\sum_{a=1}^d e^{-\eta z_a}
\]
Its gradient gives exponential weights, while its value approximates the smallest coordinate. Since $-\Phi$ is convex, we can define its corresponding Bregman divergence for $y,z\in\R^d$ by
\[
D_{-\Phi}(y,z):=\Phi(z)+\ip{\nabla\Phi(z)}{y-z}-\Phi(y).
\]

For $v\in\R^d$, write $\osc(v):=\max_av_a-\min_av_a$ for its coordinate span. The following auxiliary bound from \cite{Ha2026LogHedge} compares each Bregman divergence with the variance under its reference distribution.

\begin{lemma}[Bregman divergence and local variance {\citep[Lemma~B.1]{Ha2026LogHedge}}]
\label{lem:softmin-variance}
Let $z,v\in\R^d$ and $R\ge0$, set $p:=\nabla\Phi(z)$, and suppose that $\osc(v)\le R$. Then
\begin{equation}
\frac\eta2 e^{-\eta R}\Var_p(v)
\le D_{-\Phi}(z+v,z)
\le\frac\eta2 e^{\eta R}\Var_p(v).
\label{eq:softmin-variance}
\end{equation}
\end{lemma}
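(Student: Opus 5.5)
We prove the two-sided bound by a one-dimensional Taylor expansion of the soft-min potential along the segment from $z$ to $z+v$. Set $\phi(s):=\Phi(z+sv)$ for $s\in[0,1]$. By definition,
\[
D_{-\Phi}(z+v,z)=\phi(0)+\phi'(0)-\phi(1)=-\int_0^1(1-s)\,\phi''(s)\,\dd s,
\]
using Taylor's formula with integral remainder. A direct computation gives $\nabla\Phi(w)=p(w)$, the exponential-weights distribution $p_a(w)\propto e^{-\eta w_a}$. Differentiating once more gives $\nabla^2\Phi(w)=-\eta\,(\diag(p(w))-p(w)p(w)^\top)$. Hence
\[
\phi''(s)=-\eta\Var_{p_s}(v),\qquad p_s:=p(z+sv),
\]
and therefore
\[
D_{-\Phi}(z+v,z)=\eta\int_0^1(1-s)\Var_{p_s}(v)\,\dd s.
\]
Since $\int_0^1(1-s)\,\dd s=\tfrac12$, the lemma reduces to the pointwise comparison
\[
e^{-\eta R}\Var_p(v)\le\Var_{p_s}(v)\le e^{\eta R}\Var_p(v)\qquad\text{for every }s\in[0,1].
\]

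For this comparison we use the pairwise identity \eqref{eq:pairwise-variance}:
\[
\Var_{p_s}(v)=\sum_{a<b}p_{s,a}\,p_{s,b}\,(v_a-v_b)^2.
\]
The key step is to compare the products $p_{s,a}p_{s,b}$ with $p_ap_b$, rather than comparing single coordinates. We have
\[
p_{s,a}=\frac{p_a e^{-\eta s v_a}}{Z_s},\qquad Z_s=\sum_c p_c e^{-\eta s v_c},
\]
so
\[
p_{s,a}p_{s,b}=p_ap_b\,\frac{e^{-\eta s(v_a+v_b)}}{Z_s^2}.
\]
Write $\underline v:=\min_c v_c$ and $\bar v:=\max_c v_c$. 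Since $Z_s$ is an average of $e^{-\eta s v_c}$, it lies between $e^{-\eta s\bar v}$ and $e^{-\eta s\underline v}$. It follows that
\[
\frac{e^{-\eta s(v_a+v_b)}}{Z_s^2}\le e^{-\eta s(v_a+v_b-2\underline v)}\le 1,
\]
which is weaker than we need in one direction and does not yet yield the factor $e^{\eta R}$. To sharpen it, write $e^{-\eta s(v_a+v_b)}/Z_s^2$ as the product of the two ratios $e^{-\eta s v_a}/Z_s$ and $e^{-\eta s v_b}/Z_s$. Each ratio lies in $[e^{-\eta s R},e^{\eta s R}]$, but their product could reach $e^{\pm2\eta sR}$, so the naive bound loses a factor of $2$ in the exponent.

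This factor-of-$2$ issue is the main obstacle. The estimate I would try first is the following. Let $\mu:=-\frac1{\eta s}\log Z_s$, which lies in $[\underline v,\bar v]$. Then the exponent of the pair ratio is
\[
-\eta s\bigl((v_a-\mu)+(v_b-\mu)\bigr).
\]
The bound $|(v_a-\mu)+(v_b-\mu)|\le R$ does not follow from $\mu\in[\underline v,\bar v]$ alone, so the argument must exploit averaging in $s$. I would differentiate $\log\Var_{p_s}(v)$ in $s$. Its derivative equals $-\eta$ times a ratio of third to second central moments of $v$ under $p_s$. The bound I need is
\[
\bigl|\E_{p_s}\bigl[(v-\bar v_s)^3\bigr]\bigr|\le R\,\Var_{p_s}(v),\qquad \bar v_s:=\E_{p_s}[v],
\]
and it holds because $|v-\bar v_s|\le R$. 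This gives $|\frac{\dd}{\dd s}\log\Var_{p_s}(v)|\le\eta R$, and integrating from $0$ to $s\le1$ yields exactly the factor $e^{\pm\eta R}$. I would therefore commit to this derivative argument, which sidesteps the pairwise loss entirely.

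Substituting the pointwise comparison into the integral and using $\int_0^1(1-s)\,\dd s=\tfrac12$ gives \eqref{eq:softmin-variance}. The degenerate case $\Var_p(v)=0$ is consistent with this: then $v$ is constant on the support of $p$, which equals the support of every $p_s$, so both sides vanish.
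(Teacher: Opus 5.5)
Your argument is correct, and its first half, the integral representation $D_{-\Phi}(z+v,z)=\eta\int_0^1(1-s)\Var_{p_s}(v)\,\dd s$, is exactly the paper's. The two routes diverge at the pointwise comparison of $\Var_{p_s}(v)$ with $\Var_p(v)$.

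\textbf{The paper's route.} The paper argues statically. From $p_s(a)/p(a)=e^{-\eta s v_a}/Z_s$ it gets the coordinatewise bound $e^{-\eta sR}p(a)\le p_s(a)\le e^{\eta sR}p(a)$, the same single-ratio bound you noted. It then uses the variational formula $\Var_p(v)=\min_{c\in\R}\sum_a p(a)(v_a-c)^2$:
\begin{itemize}
\item for the upper bound, evaluate $\Var_{p_s}$ at $c=\E_p[v]$;
\item for the lower bound, evaluate $\Var_p$ at $c=\E_{p_s}[v]$.
\end{itemize}
This formula is linear in the weights, so only one density ratio enters. That is exactly how the paper avoids the factor-of-two loss you ran into with the pairwise identity, which is quadratic in the weights.

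\textbf{Your route.} Your log-derivative argument instead exploits the exponential-tilt structure. You use $\frac{\dd}{\dd s}\Var_{p_s}(v)=-\eta\,\E_{p_s}[(v-\bar v_s)^3]$ together with $|v_a-\bar v_s|\le\osc(v)\le R$. This gives $\bigl|\frac{\dd}{\dd s}\log\Var_{p_s}(v)\bigr|\le\eta R$, and hence the same factor $e^{\pm\eta sR}$.

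\textbf{Comparison.} Both arguments give identical constants. The paper's version is shorter and needs neither differentiation nor a separate treatment of the case $\Var=0$. It also applies to any pair of weight vectors with bounded likelihood ratio. Yours is tied to the tilt family and needs the positivity check you supply. That check is immediate: the softmax $p$ has full support, so $\Var_p(v)=0$ if and only if $v$ is constant.

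\textbf{A small slip.} In your exploratory paragraph, $Z_s\le e^{-\eta s\underline v}$, so the displayed inequality $e^{-\eta s(v_a+v_b)}/Z_s^2\le e^{-\eta s(v_a+v_b-2\underline v)}$ actually holds with $\ge$. This does not affect the argument you commit to.
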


\begin{proof}
We compare the potential's curvature along the segment from $z$ to $z+v$. Set $p_s:=\nabla\Phi(z+sv)$ for $s\in[0,1]$; these are the exponential weights along that segment. Since
\[
-\nabla^2\Phi(z+sv)
=\eta\bigl(\diag(p_s)-p_sp_s^\top\bigr),
\]
the integral remainder formula gives
\begin{equation}
D_{-\Phi}(z+v,z)
=\eta\int_0^1(1-s)\Var_{p_s}(v)\,\dd s.
\label{eq:hessian-integral}
\end{equation}
Moreover,
\[
\frac{p_s(a)}{p(a)}
=\frac{e^{-\eta s v_a}}{\sum_b p(b)e^{-\eta s v_b}},
\]
so $e^{-\eta sR}p(a)\le p_s(a)\le e^{\eta sR}p(a)$ coordinatewise. Using
\[
\Var_p(v)=\min_{c\in\R}\sum_a p(a)(v_a-c)^2,
\]
we obtain
\[
e^{-\eta R}\Var_p(v)\le\Var_{p_s}(v)\le e^{\eta R}\Var_p(v).
\]
Substitute into \Cref{eq:hessian-integral} and use $\int_0^1(1-s)\dd s=1/2$.
\end{proof}

The optimistic regret bound of \cite{Ha2026LogHedge} specializes Lemma~4.1 of \cite{DFG2021}, whose Lemma~A.5 refines Lemma~3 of \cite{RakhlinSridharan2013Online}. We reproduce its proof here for completeness.

\printrepeatedstatement{lem-regret-simple}{\regretSimpleRestated}

\begin{proof}
Suppress the player index from $x_i^t$, $\ell_i^t$, and $\Reg_i$. We first express the strategy in terms of past losses. Set $L^0:=0$, and for $t\ge1$ let
\[
L^t:=\sum_{\tau=1}^t\ell^\tau,
\qquad
q^t:=L^{t-1}+\ell^{t-1}.
\]
Here $L^t$ is the cumulative loss, and $q^t$ augments the observed cumulative loss with the latest loss vector as a prediction for round $t$. Unrolling Optimistic Hedge gives
\[
x_a^t=\frac{e^{-\eta q_a^t}}{\sum_{b\in[d]} e^{-\eta q_b^t}},
\qquad a\in[d].
\]
This discrepancy is the curvature correction in a potential increment. Using $\nabla\Phi(q^t)=x^t$ and $L^t-L^{t-1}=\ell^t$, we obtain the one-round identity and its cumulative regret consequence:
\begin{align}
D_{-\Phi}(L^t,q^t)-D_{-\Phi}(L^{t-1},q^t)
&=\ip{x^t}{\ell^t}-\Phi(L^t)+\Phi(L^{t-1}),\notag\\*
\Reg(T)
&\le\frac{\log d}{\eta}
 +\sum_{t=1}^T\left[D_{-\Phi}(L^t,q^t)-D_{-\Phi}(L^{t-1},q^t)\right].
\label{eq:regret-bregman}
\end{align}
The potential terms telescope, and $\Phi(L^T)\le\min_aL_a^T$ compares the final potential with the best fixed action. Uniform initialization gives $\Phi(0)=-(\log d)/\eta$, hence the initial cost in the second line.

Apply \Cref{lem:softmin-variance} to the positive and negative divergences in \Cref{eq:regret-bregman}. Since $L^t-q^t=\ell^t-\ell^{t-1}$ has coordinate span at most $2$, while $L^{t-1}-q^t=-\ell^{t-1}$ has span at most $1$, \Cref{eq:regret-bregman,eq:softmin-variance} give the exact inequality
\begin{equation}
\Reg(T)
\le\frac{\log d}{\eta}
 +\frac{\eta e^{2\eta}}2\sum_{t=1}^T\Var_{x^t}(\ell^t-\ell^{t-1})
 -\frac{\eta e^{-\eta}}2\sum_{t=1}^T\Var_{x^t}(\ell^{t-1}).
\label{eq:regret-exact}
\end{equation}
Choose a universal $\eta_{\rm rvu}>0$ such that $e^{2\eta}/2\le2/3$ and $e^{-\eta}/2\ge1/3$ for $0<\eta\le\eta_{\rm rvu}$. This proves \Cref{lem:regret-simple}.
\end{proof}

\subsection{Weighted pairwise loss gaps}

For completeness, we reproduce the weighted-gap reduction from Lemma~5.2 of \cite{Ha2026LogHedge}. The pairwise identity transfers the variances to a fixed Euclidean norm; the correction comes from the first-order differences of the probability weights.

\printrepeatedstatement{lem-weighted-gap}{\weightedGapRestated}

\begin{proof}
The pairwise variance identity \eqref{eq:pairwise-variance} follows by expanding $\frac12\sum_{a,b}x_ax_b(v_a-v_b)^2$ and noting that the diagonal terms vanish. Applying it to \Cref{eq:h} gives the identity in \Cref{eq:weighted-gap}.

For prediction error, the remaining obstacle is that the strategy weights change between rounds. Fix player $i$ and pair $a<b$, and write
\[
w_{ab}^t:=\sqrt{x_{i,a}^t x_{i,b}^t}.
\]
Uniform initialization and the multiplicative update keep $w_{ab}^t>0$ at every finite round, so these ratios are well defined. Their bounds remain uniform as probabilities approach zero. The optimistic loss vector $m_i^t:=2\ell_i^t-\ell_i^{t-1}$ has $\osc(m_i^t)\le3$, and \Cref{eq:omwu} gives
\begin{equation}
e^{-3\eta}\le\frac{w_{ab}^{t+1}}{w_{ab}^t}\le e^{3\eta},
\qquad
\left|\frac{w_{ab}^t}{w_{ab}^{t+1}}-1\right|\le C\eta
\label{eq:weight-ratio}
\end{equation}
for a universal $C$ and all sufficiently small $\eta$. Thus the relative change of each pair weight is $O(\eta)$.

Since $[h_i^t]_{ab}=w_{ab}^t(\ell_{i,a}^{t-1}-\ell_{i,b}^{t-1})$,
\begin{equation}
w_{ab}^t\bigl[(\ell_{i,a}^{t}-\ell_{i,b}^{t})-(\ell_{i,a}^{t-1}-\ell_{i,b}^{t-1})\bigr]
=[h_i^{t+1}-h_i^t]_{ab}
 +\left(\frac{w_{ab}^t}{w_{ab}^{t+1}}-1\right)[h_i^{t+1}]_{ab}.
\label{eq:moving-weight}
\end{equation}
The first term is the weighted-gap difference $[\Delta h_i^t]_{ab}$; the second accounts for the weight ratio $w_{ab}^t/w_{ab}^{t+1}$. Its coefficient is $O(\eta)$, so squaring yields the $O(\eta^2)$ correction in the lemma. Apply $(a+b)^2\le\frac32a^2+3b^2$ to \Cref{eq:moving-weight}, then sum over $t$ and $a<b$. Using \Cref{eq:pairwise-variance,eq:weight-ratio},
\begin{align*}
\sum_{t=1}^T\Var_{x_i^t}(\ell_i^t-\ell_i^{t-1})
&=\sum_{t=1}^T\sum_{a<b}
 \left(w_{ab}^t\bigl[(\ell_{i,a}^{t}-\ell_{i,b}^{t})
 -(\ell_{i,a}^{t-1}-\ell_{i,b}^{t-1})\bigr]\right)^2\\
&\le\frac32\sum_{t=1}^T\norm{h_i^{t+1}-h_i^t}_2^2
 +C\eta^2\sum_{t=1}^T\norm{h_i^{t+1}}_2^2\\
&\le\frac32\sum_{t=1}^T\norm{h_i^{t+1}-h_i^t}_2^2
 +C\eta^2\sum_{t=1}^T\norm{h_i^t}_2^2+C\eta^2.
\end{align*}
The last inequality uses $h_i^1=0$ and
\[
\sum_{t=1}^T\norm{h_i^{t+1}}_2^2
=\sum_{t=1}^T\norm{h_i^t}_2^2+\norm{h_i^{T+1}}_2^2
\le\sum_{t=1}^T\norm{h_i^t}_2^2+\frac14.
\]
The terminal bound follows from \(\norm{h_i^{T+1}}_2^2=\Var_{x_i^{T+1}}(\ell_i^T)\le1/4\). Thus moving the summation window contributes only the final $C\eta^2$ term, and \Cref{lem:weighted-gap} follows.
\end{proof}

\subsection{Interpolation of finite differences}

This interpolation step is independent of Optimistic Hedge. It applies because the reduction to weighted gaps has put all difference orders in one Hilbert space with a fixed norm.

\printrepeatedstatement{lem-fourier-tools}{\fourierToolsRestated}

\begin{proof}
Fourier transformation turns finite differences into multiplication, allowing all orders to be compared in one representation. Let \(\widehat y(\omega):=\sum_{t\in\mathbb Z}y^te^{-it\omega}\) be the Fourier transform and \(\lambda(\omega):=|e^{i\omega}-1|^2\) the squared magnitude of the first-order difference multiplier. Parseval's identity gives
\[
\norm{\Delta^j y}_{\ell_2}^2
=\frac1{2\pi}\int_{-\pi}^{\pi}\lambda(\omega)^j
 \norm{\widehat y(\omega)}^2\,\dd\omega.
\]
Thus each order weights the same Fourier components by a different power of $\lambda$. The cases \(j=0,m\) are immediate. For \(0<j<m\), apply H\"older's inequality with exponents \(m/(m-j)\) and \(m/j\) to
\[
\lambda^j\norm{\widehat y}^2
=\bigl(\norm{\widehat y}^2\bigr)^{1-j/m}
 \bigl(\lambda^m\norm{\widehat y}^2\bigr)^{j/m}.
\]
This proves \Cref{eq:difference-interpolation}.
\end{proof}

\section{Exact finite-order difference relation}
\label{app:exact}

We first extend the state update and scaled finite-difference maps analytically through zero step size. \appref{app:frisch-noetherian} introduces the analytic prerequisites and proves Frisch's Noetherianity theorem from them. \appref{app:finite-generation} then proves \Cref{thm:finite-dependence} and applies it uniformly over states and games to obtain \Cref{lem:finite-gap-relation}. We defer the finite-support construction to \appref{app:technical}.

\subsection{Analytic state update}

To preserve the quantifiers in \Cref{thm:constant}, we include the loss table in the compact parameter set from the outset. Define
\[
\mathcal R_i:=\{r_i\in\R_+^{d_i}:\norm{r_i}_2=1\},
\qquad
\mathcal R:=\prod_{i=1}^n\mathcal R_i,
\qquad
\mathcal V:=\prod_{i=1}^n[0,1]^{d_i}.
\]
Thus $\mathcal S=\mathcal R\times\mathcal V$, where $\mathcal V$ contains the stored loss profiles. Let $\mathcal G:=[0,1]^{n\prod_jd_j}$ be the set of loss tables, and write $g=(g_1,\ldots,g_n)\in\mathcal G$. For $r_i\ne0$, set $[x_i(r_i)]_a:=r_i(a)^2/\norm{r_i}_2^2$. For $r\in\mathcal R$, define the current expected-loss profile by
\[
[\ell_g(r)]_i(a_i)
:=\sum_{a_{-i}\in\prod_{j\ne i}A_j}
 g_i(a_i,a_{-i})\prod_{j\ne i}[x_j(r_j)]_{a_j}.
\]
Set $K_0:=\mathcal R\times\mathcal V\times\mathcal G$ and $K_\star:=K_0\times\{0\}$. We restore the loss-table argument in the maps as $F_i(r,\ell;g,\eta)$, $G(s;g,\eta)$, and $H_{i,k}(s;g,\eta)$; under $G^j(s;g,\eta)$, we keep both $g$ and $\eta$ fixed. The update is
\[
\begin{aligned}
[F_i(r,\ell;g,\eta)]_a
&:=\frac{\norm{r_i}_2\,r_i(a)
 e^{-\eta(2[\ell_g(r)]_i(a)-[\ell_i]_a)/2}}
 {\left(\sum_b r_i(b)^2
 e^{-\eta(2[\ell_g(r)]_i(b)-[\ell_i]_b)}\right)^{1/2}},\\
G(r,\ell;g,\eta)&:=\left(\bigl(F_i(r,\ell;g,\eta)\bigr)_{i=1}^n,\ell_g(r)\right).
\end{aligned}
\]

\printrepeatedstatement{lem-analytic-state-representation}{\analyticStateRepresentationRestated}

\begin{proof}
We verify joint analyticity, invariance, and the zero-step identity in turn. On a sufficiently small neighborhood of $\mathcal R$, every $\norm{r_i}_2$ stays bounded away from zero. Hence $x_i(r_i)$ and $\ell_g(r)$ are rational, and therefore jointly real analytic, in $(r,g)$. At $\eta=0$, the expression under the denominator square root in \Cref{eq:analytic-square-root-update} equals $\norm{r_i}_2^2$, which is uniformly positive on $K_0$. Continuity and compactness give one product neighborhood and some $\bar\eta>0$ on which every denominator remains positive for $|\eta|<\bar\eta$. The positive square-root function is analytic on $(0,\infty)$, so the full state update is jointly real analytic there.

For $r_i\in\mathcal R_i$,
\[
\sum_a[F_i(r,\ell;g,\eta)]_a^2
=\frac{\sum_a r_i(a)^2e^{-\eta(2[\ell_g(r)]_i(a)-[\ell_i]_a)}}
       {\sum_b r_i(b)^2e^{-\eta(2[\ell_g(r)]_i(b)-[\ell_i]_b)}}
=1.
\]
The update also preserves nonnegativity, and $\ell_g(r)\in\mathcal V$; hence $G$ leaves $\mathcal S$ invariant. At $\eta=0$, the denominator equals $\norm{r_i}_2$, so $F_i(r,\ell;g,0)=r_i$ throughout the ambient neighborhood. Therefore $G(r,\ell;g,0)=(r,\ell_g(r))$. A second zero-step update leaves both components unchanged, proving the zero-step identity.
\end{proof}

\subsection{Analytic scaled finite differences}

For $s=(r,\ell)$, define the weighted-gap map by
\[
[H_{i,0}(s;g,\eta)]_{ab}
:=r_i(a)r_i(b)\bigl([\ell_i]_a-[\ell_i]_b\bigr),
\qquad a<b,
\]
and, for $\eta\ne0$, recursively define
\[
H_{i,k+1}(s;g,\eta)
:=\frac{H_{i,k}(G^2(s;g,\eta);g,\eta)
      -H_{i,k}(G(s;g,\eta);g,\eta)}{\eta}.
\]
The trajectory identity follows directly by induction. The case $k=0$ follows from $H_{i,0}(s^t;g,\eta)=h_i^t$. If it holds at order $k$, then
\begin{align*}
H_{i,k+1}(s^t;g,\eta)
&=\frac{H_{i,k}(s^{t+2};g,\eta)-H_{i,k}(s^{t+1};g,\eta)}{\eta}\\
&=\eta^{-(k+1)}
 \left(\Delta^kh_i^{t+k+2}-\Delta^kh_i^{t+k+1}\right)
 =\eta^{-(k+1)}\Delta^{k+1}h_i^{t+k+1}.
\end{align*}
Thus $H_{i,k}(s^t;g,\eta)=\eta^{-k}\Delta^k h_i^{t+k}$ for every $k\ge0$ and $t\ge1$.

The zero-step identity makes the numerator vanish at $\eta=0$. The next proof turns that cancellation into an analytic extension at every fixed order.

\printrepeatedstatement{lem-scaled-difference-analytic}{\scaledDifferenceAnalyticRestated}

\begin{proof}
We argue by induction and shrink the common neighborhood of $K_\star$ only when a composition requires it. The base map $H_{i,0}$ is polynomial. Suppose $H_{i,k}$ is analytic on such a neighborhood, and define
\[
N_{i,k}(s;g,\eta)
:=H_{i,k}(G^2(s;g,\eta);g,\eta)
 -H_{i,k}(G(s;g,\eta);g,\eta).
\]
By \Cref{lem:analytic-state-representation}, $N_{i,k}(s;g,0)=0$ on a neighborhood in $(s,g)$. After restricting to a product neighborhood on which $N_{i,k}$ is analytic, the fundamental theorem of calculus gives, coordinatewise,
\[
N_{i,k}(s;g,\eta)
=\eta\int_0^1\partial_\eta N_{i,k}(s;g,\tau\eta)\,\dd\tau.
\]
The integral is jointly analytic on a possibly smaller neighborhood and equals $N_{i,k}/\eta=H_{i,k+1}$ for $\eta\ne0$. It therefore extends $H_{i,k+1}$ analytically through $\eta=0$.
\end{proof}

\subsection{Frisch's Noetherianity Theorem}
\label{app:frisch-noetherian}

Frisch's theorem lets us select finitely many analytic generators near a compact semianalytic set. We introduce the definitions and three classical analytic lemmas needed for its proof. We state these lemmas in terms of scalar functions and analytic coefficients; their analytic foundations are supplied by the cited results. We reproduce the proof of Frisch's theorem using these lemmas.

\begin{definition}[Real-analytic functions]
\label{def:real-analytic}
\itshape
Let $U\subset\R^q$ be open. A function $f:U\to\R$ is \textit{real-analytic} if, for every $x\in U$, $f(x')$ equals a convergent power series in $x'-x$ for $x'$ in a neighborhood of $x$.
\end{definition}

\begin{definition}[Analytic germs]
\label{def:analytic-germs}
\itshape
Two real-analytic functions, each defined on an open neighborhood of a compact set $K\subset\R^q$, represent the same \textit{germ along $K$} if they agree on an open neighborhood of $K$ contained in both domains. A germ along $K$ is an equivalence class under this relation. The set of these germs is denoted by $\mathcal O(K)$.
\end{definition}

Intuitively, $\mathcal O(K)$ describes real-analytic functions near $K$, identifying two functions whenever they agree on an open neighborhood of $K$. In particular, $\mathcal O(K)$ forms a commutative ring: the sum and product of germs represented by $f$ and $g$ are represented by $f+g$ and $fg$ on the intersection of their domains. When $K=\{x\}$, write $\mathcal O_x:=\mathcal O(\{x\})$ and denote the germ of $f$ at $x$ by $f_x$. Note that agreement only at the points of $K$, rather than on a neighborhood of $K$, does not identify germs.

\begin{definition}[Ideals]
\label{def:analytic-ideals}
\itshape
A subset $I$ of a commutative ring $R$ is an \textit{ideal} if it contains zero and is closed under addition and multiplication by arbitrary elements of $R$.
\end{definition}

\begin{definition}[Finitely generated ideals]
\label{def:finitely-generated-ideal}
\itshape
An ideal $I\subseteq R$ is \textit{finitely generated} if there exist $f_1,\ldots,f_r\in I$ such that
\[
I=(f_1,\ldots,f_r)
:=\left\{\sum_{j=1}^r a_jf_j:a_1,\ldots,a_r\in R\right\}.
\]
\end{definition}

We write $f_x\in(f_{1,x},\ldots,f_{r,x})$ in $\mathcal O_x$ to denote that $f=\sum_j a_jf_j$ on some neighborhood of $x$, with analytic coefficients $a_j$ there.

\begin{definition}[Noetherian rings]
\label{def:noetherian-ring}
\itshape
A commutative ring $R$ is \textit{Noetherian} if every ideal of $R$ is finitely generated.
\end{definition}

\begin{definition}[Semianalytic sets]
\label{def:semianalytic}
\itshape
A set $S\subset\R^q$ is \textit{semianalytic} if, for every $x\in\R^q$, there exists an open neighborhood $U\subset\R^q$ of $x$ such that $S\cap U$ is a finite union of sets, each of the form
\[
\{x'\in U:f_1(x')=\cdots=f_r(x')=0,\quad g_1(x')>0,\ldots,g_s(x')>0\},
\]
where $r,s\ge0$ are integers and the functions $f_i,g_j:U\to\R$ are real analytic; either list of conditions may be empty. The neighborhood $U$ and the finite families of defining functions may depend on $x$.
\end{definition}

In particular, a set defined by finitely many polynomial equalities and weak inequalities is semianalytic, since a weak inequality is the union of a strict inequality and an equality.

The following three lemmas, \Cref{lem:analytic-local-generation}, \Cref{lem:analytic-membership-neighborhood}, and \Cref{lem:analytic-coefficient-gluing}, establish when $f$ can be written as $\sum_j a_jf_j$ with analytic coefficients $a_j$. They give finite generators at a point, a common neighborhood for local representations, and analytic coefficients valid near all of $K$.

We first seek finitely many generators at one point, which would allow us to write every germ in an ideal of $\mathcal O_x$ as their linear combination.

\begin{lemma}[Finite generation at a point {\citep[Theorem~1.15]{GreuelLossenShustin2007}}]
\label{lem:analytic-local-generation}
For every $x\in\R^q$, the ring $\mathcal O_x$ of real-analytic germs at $x$ is Noetherian.
\end{lemma}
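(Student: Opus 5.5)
The plan is to identify $\mathcal O_x$ with the ring $\R\{z_1,\ldots,z_q\}$ of convergent power series and prove that this ring is Noetherian. Translating $x$ to the origin sends real-analytic germs at $x$ bijectively and ring-isomorphically to convergent series in $z=x'-x$, by \Cref{def:real-analytic} together with the identity theorem (a germ is determined by its Taylor series). The argument is then the classical induction on $q$ via Weierstrass preparation and division, which is the route of \citet[Theorem~1.15]{GreuelLossenShustin2007}.

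First, I establish the Weierstrass division theorem for $\R\{z\}$. Fix a series $f$ that is $z_q$-general of order $p$, meaning $f(0,\ldots,0,z_q)=z_q^p\cdot(\text{unit})$. Then every $g\in\R\{z\}$ can be written uniquely as $g=Qf+R$, where $R$ is a polynomial in $z_q$ of degree $<p$ with coefficients in $\R\{z_1,\ldots,z_{q-1}\}$.

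\begin{itemize}
\item The standard proof treats $f=z_q^p u - \varepsilon$ as a small perturbation of $z_q^p$ and obtains the division map as a contraction on a Banach space of series convergent on a small polydisc, with weighted sup norm.
\item Alternatively, one can complexify, apply the Cauchy integral construction over $\mathbb C$, and note that real coefficients are preserved by uniqueness under conjugation.
\end{itemize}

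The preparation theorem follows by dividing $z_q^p$ by $f$.

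Second comes the induction. Let $I\subseteq\R\{z\}$ be a nonzero ideal (for $q=0$ the ring is the field $\R$) and pick any $0\ne f\in I$. After a linear change of coordinates, $f$ is $z_q$-general. One such change is $z_j\mapsto z_j+c_jz_q$ for generic $c_j$: it makes the lowest-degree homogeneous part of $f$ nonvanishing at $e_q$, which is possible because a nonzero polynomial over the infinite field $\R$ is not identically zero. By division, every $g\in I$ satisfies $g\equiv R$ modulo $(f)$, with $R\in I\cap M$, where
\[
M:=\bigoplus_{j<p}\R\{z_1,\ldots,z_{q-1}\}\,z_q^j .
\]
$M$ is a finitely generated module over the ring $\R\{z_1,\ldots,z_{q-1}\}$, which is Noetherian by the induction hypothesis. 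A finitely generated module over a Noetherian ring is Noetherian, so its submodule $I\cap M$ is finitely generated, say by $R_1,\ldots,R_s$. Then $I=(f,R_1,\ldots,R_s)$.

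I expect the main obstacle to be the convergence part of Weierstrass division, namely showing that $Q$ and $R$ converge rather than being merely formal. I would handle this with the contraction estimate on a polydisc of radii $(\rho',\rho_q)$, choosing $\rho'$ small relative to $\rho_q$ so that the perturbation term has norm $<1$. Everything else is algebra.
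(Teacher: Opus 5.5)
Your proposal is correct. The identification $\mathcal O_x\cong\R\{z_1,\ldots,z_q\}$, Weierstrass division for real convergent series (by contraction or by complexifying and invoking uniqueness under conjugation), the generic linear change of coordinates making $f$ $z_q$-general, and the induction through the finitely generated $\R\{z_1,\ldots,z_{q-1}\}$-module $I\cap M$ together form exactly the classical argument of \citet[Theorem~1.15]{GreuelLossenShustin2007}, which the paper cites in place of giving its own proof.
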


The lemma implies that every germ in an ideal of $\mathcal O_x$ can be expressed using the same finite number of generators. Hence, in $f=\sum_{j=1}^r a_jf_j$, the coefficients $a_j$ may vary with $f$, but the generators $f_1,\ldots,f_r$ are shared across all real-analytic functions $f$ whose germs belong to that ideal.

The following lemma allows us to extend \Cref{lem:analytic-local-generation}'s finite generation at a single point to its neighborhood within $K$.

\begin{lemma}[Finite generation on a neighborhood {\citep[Chapter~V, \S3, pp.~186--187]{BanicaStanasila1976}}]
\label{lem:analytic-membership-neighborhood}
Let $K\subset\R^q$ be compact and semianalytic, let $x\in K$, and fix $f_1,\ldots,f_r\in\mathcal O(K)$. There is a relative open neighborhood $W$ of $x$ in $K$ such that, for every $f\in\mathcal O(K)$,
\[
f_x\in(f_{1,x},\ldots,f_{r,x})
\quad\Longrightarrow\quad
f_{x'}\in(f_{1,x'},\ldots,f_{r,x'})
\quad\text{for every }x'\in W.
\]
The neighborhood $W$ may depend on $K,x,f_1,\ldots,f_r$, but not on $f$.
\end{lemma}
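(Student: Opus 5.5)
The plan is to derive this from Oka's coherence theorem and the finiteness of associated subvarieties of a coherent sheaf, after complexifying near $x$. The point to watch is uniformity in $f$. Given a representation $f=\sum_j a_jf_j$ near $x$, membership at nearby $x'$ is trivial on the domain of the $a_j$, but that domain depends on $f$. So $W$ must be produced from the ideal generated by $f_1,\ldots,f_r$ alone, not from any particular $f$.

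\emph{Step 1 (complexification and real/complex compatibility).} Choose representatives of $f_1,\ldots,f_r$ on an open neighborhood of $K$. Near $x$ they extend holomorphically to a polydisc $P\subset\mathbb C^q$ centered at $x$. Let $\mathcal I\subset\mathcal O_P$ be the ideal sheaf generated by these extensions; by Oka's theorem it is coherent, and so is $\mathcal Q:=\mathcal O_P/\mathcal I$. For a real point $x'$, I will check that real membership $f_{x'}\in(f_{1,x'},\ldots,f_{r,x'})$ in $\mathcal O_{x'}$ is equivalent to membership of the complexified germ in $\mathcal I_{x'}$. One direction is immediate. For the other, replace the holomorphic coefficients $b_j$ by $\tfrac12(b_j+\overline{b_j(\bar z)})$, which are real on real points, using that the $f_j$ and $f$ are real there.

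\emph{Step 2 (the uniform neighborhood).} The coherent sheaf $\mathcal Q$ has a locally finite family of associated subvarieties. After shrinking $P$ to a smaller polydisc $P'$ around $x$, I will arrange two things: only finitely many irreducible analytic sets $Y_1,\ldots,Y_p$ occur as associated subvarieties of $\mathcal Q$ on $P'$, and every $Y_k$ passes through $x$. The second can be done by discarding the finitely many components that avoid a smaller closed polydisc around $x$. Then set $W:=P'\cap K$. This depends only on $K,x,f_1,\ldots,f_r$.

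\emph{Step 3 (propagation).} Fix $f\in\mathcal O(K)$ with $f_x\in\mathcal I_x$. Its complexification is defined on some open $\Omega_f\ni x$, possibly thin. On $\Omega_f\cap P'$, consider the class $[f]$ of $f$ in $\mathcal Q$ and its support $Z=V(\operatorname{Ann}[f])$. Since minimal primes of the annihilator of a section are associated primes, each irreducible component of $Z$, taken near any point, is a local component of some $Y_k$.

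The main obstacle is the following claim. If $x'\in W$ lay in $Z$, then $Z$ would contain a full local branch of some $Y_k$ at $x'$. Irreducibility of $Y_k$ together with $x\in Y_k$ would then force $x\in Z$, contradicting $f_x\in\mathcal I_x$.

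The subtlety is that irreducibility of $Y_k$ on $P'$ need not survive restriction to the thin set $\Omega_f$. I would try to handle this in two steps. First, use identity-theorem arguments on connected real slices. Second, use the fact that $\operatorname{Ann}[f]$ is defined by the coherent structure on $P'$, so vanishing of $[f]$ on an open piece of the regular locus of $Y_k$ propagates along $Y_k$. Semianalyticity of $K$ is used here: it ensures that $W$ can be taken with finitely many connected pieces joining $x'$ to $x$ within $\Omega_f$.

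Granting this claim, $[f]$ vanishes at every $x'\in W$. By Step 1, $f_{x'}\in(f_{1,x'},\ldots,f_{r,x'})$ for every $x'\in W$, as required.
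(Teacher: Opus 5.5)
Your preliminary reductions are sound. Complexifying near $x$ works. The real/complex membership comparison via $\tfrac12\bigl(b_j+\overline{b_j(\bar z)}\bigr)$ is correct. So is the observation that every local irreducible component of $Z=\operatorname{supp}[f]$ is a local branch of some associated subvariety $Y_k$. (The paper gives no argument of its own here; it imports the lemma from Banica--Stanasila.) The gap is that the claim you flag as the main obstacle is the whole content of the lemma, and your sketch for it does not work.

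\emph{First}, $\operatorname{Ann}[f]$ is not defined by any structure on $P'$. The class $[f]$ is a section only over $\Omega_f\cap P'$, so vanishing propagates along $\operatorname{Reg}(Y_k)$ only within connected components of $\operatorname{Reg}(Y_k)\cap\Omega_f$. That is exactly the thin-domain problem you identified, not a way around it.

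\emph{Second}, connectedness of $K$ near $x$ says nothing by itself about the particular complex branch $B\subseteq Z$ of $Y_k$ at $x'$. A path in $K$ need not lie in $Y_k$, and a path in $K\cap Y_k$ can switch branches at singular points. The real trace of $B$ can also be much smaller than $B$: on the handle $z<0$ of the Whitney umbrella $y^2=zx^2$, each complex branch meets $\R^3$ only in the handle. So identity-theorem arguments on real slices do not control $B$.

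\emph{Third}, Step~2 does not deliver what the propagation needs. Discarding components that avoid a closed polydisc leaves components that merely meet it, not components through $x$. Moreover, irreducibility of $Y_k$ in $P'$ is useless once $\Omega_f$ contains only a tiny ball $B(x,\epsilon_f)$. You need each $Y_k$ to represent an irreducible germ at $x$ with $\operatorname{Reg}(Y_k)\cap B(x,\epsilon)$ connected for every $0<\epsilon\le\epsilon_0$, which comes from the local conic structure at $x$.

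What is missing is an $f$-independent mechanism that carries an arbitrary local branch of $Y_k$ at $x'\in W$ back to $x$ inside an arbitrarily thin neighborhood of $K$. This is precisely where semianalyticity must enter, and without it the lemma fails. For example, take $K=\{0\}\cup\{1/n:n\ge1\}\subset\R$ and $f_1=0$. Given any $W$, pick $1/N\in W$ and let $f$ be the locally constant function on a disjoint union of open intervals covering $K$ that equals $1$ near $1/N$ and $0$ elsewhere.

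One way to supply the mechanism is as follows.
\begin{enumerate}
\item Take a complex Whitney stratification of the $Y_k$ near $x$. For each stratum $S$ the set $K\cap S$ is semianalytic, so for small $\delta$ every connected component of $K\cap S\cap B(x,\delta)$ has $x$ in its closure. Choose $\delta$ also so that $B(x,\delta)$ misses the associated subvarieties not through $x$, and set $W:=K\cap B(x,\delta)$.
\item Any $x'\in W\cap Y_k$ lying in a stratum $S$ is then joined by a path in $K\cap S\subseteq\Omega_f$ to points of $B(x,\epsilon_f)$.
\item Local topological triviality along $S$ makes the local branches a finite covering over $S$, so $B$ continues along the path. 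An open--closed argument keeps the continued branch inside $Z$; it uses that $Z$ is closed analytic in $\Omega_f$ and that local branches have connected regular parts in small neighborhoods.
\item At the end of the path, connectedness of $\operatorname{Reg}(Y_k)\cap B(x,\epsilon_f)$ forces $Y_k\cap B(x,\epsilon_f)\subseteq Z$. Hence $x\in Z$, a contradiction.
\end{enumerate}
Alternatively, follow Frisch's own route via finiteness of the connected components of $K\cap A$ for analytic $A$. As written, your proposal leaves the uniformity in $f$, the only nontrivial part of the statement, unproved.
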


Intuitively, \Cref{lem:analytic-local-generation} selects finitely many generators from an ideal at $x$, and \Cref{lem:analytic-membership-neighborhood} gives a single neighborhood where those generators work locally for every element of the ideal. Compactness then allows us to collect these generators into a single finite family that works at every point of $K$. The following lemma turns these local representations into a representation with analytic coefficients valid near all of $K$.

\begin{lemma}[Analytic coefficients near a compact set {\citep[Theorem~7.2.1(ii)]{Hormander1990}}]
\label{lem:analytic-coefficient-gluing}
Let $K\subset\R^q$ be compact and let $f,f_1,\ldots,f_r\in\mathcal O(K)$. If $f_x\in(f_{1,x},\ldots,f_{r,x})$ for every $x\in K$, then there are $a_1,\ldots,a_r\in\mathcal O(K)$ such that
\[
f=\sum_{j=1}^r a_jf_j
\quad\text{in }\mathcal O(K).
\]
\end{lemma}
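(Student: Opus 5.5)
The plan is to complexify and reduce to Cartan's Theorem~B on a Stein neighborhood of $K$ in $\mathbb C^q$. First, choose representatives of $f,f_1,\ldots,f_r$ real analytic on a common open neighborhood $U$ of $K$ in $\R^q$. Each of them equals a convergent power series near every point of $U$. Hence, after shrinking, all of them extend holomorphically to an open set $\widetilde U\subset\mathbb C^q$ with $\widetilde U\cap\R^q\supset K$. We may take $\widetilde U$ invariant under complex conjugation $z\mapsto\bar z$; the extensions then satisfy $\overline{F(\bar z)}=F(z)$ because they are real on real points. Write $F,F_1,\ldots,F_r$ for these holomorphic extensions.

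Second, transfer the pointwise hypothesis to the complex setting and thicken it to an open set. Suppose $f_x\in(f_{1,x},\ldots,f_{r,x})$ in $\mathcal O_x$ with real-analytic coefficients. Complexifying those coefficients gives $F_x\in(F_{1,x},\ldots,F_{r,x})$ in the ring of holomorphic germs at $x$. Let $\mathcal I\subset\mathcal O_{\widetilde U}$ be the ideal sheaf generated by $F_1,\ldots,F_r$; it is coherent by Oka's theorem. The image of $F$ in the coherent quotient $\mathcal O_{\widetilde U}/\mathcal I$ is a section whose support is closed. Since this section vanishes at every point of $K$, $F$ is a section of $\mathcal I$ on an open neighborhood $\Omega_0\supset K$ in $\mathbb C^q$.

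Third, apply Theorem~B. A compact subset of $\R^q\subset\mathbb C^q$ has a fundamental system of Stein neighborhoods, by Grauert's result that totally real sets admit strictly plurisubharmonic exhaustions (for instance, $\sum_j(\operatorname{Im}z_j)^2$ plus a correction). We can choose such a Stein $\Omega\subset\Omega_0$ that is conjugation invariant. On $\Omega$ the sheaf map $\mathcal O^r\to\mathcal I$, $(A_j)\mapsto\sum_jA_jF_j$, is surjective with coherent kernel. Theorem~B gives $H^1(\Omega,\ker)=0$, so the induced map on global sections is surjective. Hence there are holomorphic $A_1,\ldots,A_r$ on $\Omega$ with $F=\sum_jA_jF_j$.

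Finally, make the coefficients real. Replace $A_j$ by $\tfrac12\bigl(A_j(z)+\overline{A_j(\bar z)}\bigr)$. The identity is preserved because $F$ and $F_j$ are conjugation symmetric. Restricting to $\Omega\cap\R^q$, an open neighborhood of $K$, then gives real-analytic $a_j$ with $f=\sum_ja_jf_j$ there, which is the claim in $\mathcal O(K)$.

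The main obstacle is the third step: producing Stein neighborhoods and invoking Theorem~B. I would cite these results (Grauert, Cartan) rather than reprove them. The step that needs care is the passage from membership at points of $K$ to membership on an open set, which rests on coherence. Compactness of $K$ enters only through the choice of a single neighborhood. Semianalyticity is not needed for this lemma.
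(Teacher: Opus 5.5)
Your proof is correct. The paper gives no argument for this lemma; it quotes the cited result of H\"ormander. Your proposal therefore unpacks that citation into a reduction to standard several-complex-variables results: complexify the representatives, pass from membership at points of $K$ to membership on an open set $\Omega_0\subset\mathbb C^q$, shrink to a Stein neighborhood, apply Cartan's Theorem~B to the relation sheaf of $(F_1,\ldots,F_r)$, and return to real coefficients. The citation buys brevity. Your route shows exactly which inputs are used: Oka coherence of the relation sheaf, Stein neighborhoods of compact subsets of $\R^q$, and Theorem~B. It also confirms that this lemma, unlike \Cref{lem:analytic-membership-neighborhood}, needs no semianalyticity.

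Three small points would tighten it.
\begin{enumerate}
\item The thickening step does not rest on coherence, despite your closing remark. For a single fixed $F$, if $F=\sum_jA_jF_j$ near $x$, the same identity gives membership at all nearby points, so $\{z:F_z\in\mathcal I_z\}$ is open by definition. Coherence is needed only for the kernel in Theorem~B.
\item The Stein neighborhood has a clean justification. Every compact $K\subset\R^q$ is polynomially convex in $\mathbb C^q$, because the entire function $w\mapsto e^{-\sum_j(w_j-x_j)^2}$ separates any $z=x+iy\notin K$ from $K$. Hence $K$ has a basis of polynomial-polyhedron, and therefore Stein, neighborhoods.
\item The conjugation symmetrization is unnecessary. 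On $\Omega\cap\R^q$ the functions $f,f_j$ are real, so taking real parts in $f=\sum_jA_jf_j$ gives $f=\sum_j(\operatorname{Re}A_j)f_j$ with real-analytic coefficients. Then $\Omega$ need not be conjugation invariant.
\end{enumerate}
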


Together with compactness, these three lemmas yield Frisch's finite-generation theorem.

\begin{theorem}[Frisch's Noetherianity theorem, real-analytic form {\citep[Theorem~I.9]{Frisch1967}}]
\label{thm:frisch-noetherian}
Let $K\subset\R^q$ be compact and semianalytic. Then $\mathcal O(K)$ is Noetherian: every ideal $I\subseteq\mathcal O(K)$ has finitely many generators $f_1,\ldots,f_r\in I$, so that
\[
I=\left\{\sum_{j=1}^r a_jf_j:a_1,\ldots,a_r\in\mathcal O(K)\right\}.
\]
\end{theorem}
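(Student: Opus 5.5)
We will prove \Cref{thm:frisch-noetherian} from \Cref{lem:analytic-local-generation,lem:analytic-membership-neighborhood,lem:analytic-coefficient-gluing} by compactness. Fix an ideal $I\subseteq\mathcal O(K)$. For each $x\in K$, consider the ideal $I_x\subseteq\mathcal O_x$ generated by the germs $\{f_x:f\in I\}$. Because $\mathcal O_x$ is Noetherian by \Cref{lem:analytic-local-generation}, $I_x$ is finitely generated. A standard argument lets us take these generators to be germs of elements of $I$ itself: any finite generating set of $I_x$ consists of finite $\mathcal O_x$-combinations of germs $f_x$ with $f\in I$, and only finitely many such $f$ appear. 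So we fix $f^{(x)}_1,\ldots,f^{(x)}_{r_x}\in I$ whose germs at $x$ generate $I_x$.

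Next, apply \Cref{lem:analytic-membership-neighborhood} to $x$ and $f^{(x)}_1,\ldots,f^{(x)}_{r_x}$. This yields a relatively open $W_x\ni x$ in $K$ with the following property. For every $f\in I$ we have $f_x\in I_x=(f^{(x)}_{1,x},\ldots)$, and hence $f_{x'}\in(f^{(x)}_{1,x'},\ldots,f^{(x)}_{r_x,x'})$ for all $x'\in W_x$. The sets $W_x$ cover the compact set $K$, so we extract a finite subcover $W_{x_1},\ldots,W_{x_p}$. Collect all the corresponding generators into one finite family $f_1,\ldots,f_r\in I$.

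Now let $f\in I$ be arbitrary. Every $x'\in K$ lies in some $W_{x_k}$. Therefore $f_{x'}$ lies in the ideal generated at $x'$ by the germs of the $k$-th group, which is contained in $(f_{1,x'},\ldots,f_{r,x'})$. Since this holds pointwise on all of $K$, \Cref{lem:analytic-coefficient-gluing} produces $a_1,\ldots,a_r\in\mathcal O(K)$ with $f=\sum_ja_jf_j$ in $\mathcal O(K)$. Hence $I\subseteq(f_1,\ldots,f_r)$. The reverse inclusion holds because each $f_j\in I$ and $I$ is an ideal. This proves that $I$ is finitely generated.

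The main obstacle is the uniformity in \Cref{lem:analytic-membership-neighborhood}: the neighborhood $W_x$ must not depend on $f$, since otherwise the compactness extraction would depend on $f$ and yield no common generating set. This is exactly where semianalyticity of $K$ enters, and we simply invoke the cited result. A minor technical point is that \Cref{lem:analytic-membership-neighborhood} is stated for $f\in\mathcal O(K)$, while we apply it to germs of elements of $I\subseteq\mathcal O(K)$; this is consistent. The only care needed is that $f_1,\ldots,f_r$ all live on a common neighborhood of $K$, which holds after intersecting finitely many domains.
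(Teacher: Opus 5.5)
Your proposal is correct and matches the paper's proof step for step. You choose generators of each stalk ideal $I_x$ from $I$ via \Cref{lem:analytic-local-generation}, use the $f$-independent neighborhood of \Cref{lem:analytic-membership-neighborhood} and compactness to obtain one finite family $f_1,\ldots,f_r\in I$ generating every $I_{x'}$, and glue coefficients with \Cref{lem:analytic-coefficient-gluing}; the only extra care in the paper is spelling out the trivial cases $K=\varnothing$ and $I_x=\{0\}$ (take the single generator $0$).
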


\begin{proof}
Fix an ideal $I\subseteq\mathcal O(K)$; the case $K=\varnothing$ is immediate. We first choose generators that work locally throughout $K$, and then obtain analytic coefficients near the whole set.

\emph{Step 1: finite generators near each point.}
For $x\in K$, let $I_x\subseteq\mathcal O_x$ be the ideal generated by $\{f_x:f\in I\}$. By \Cref{lem:analytic-local-generation}, $I_x$ has finitely many generators. Each generator is a finite combination of germs coming from $I$. Collecting the elements of $I$ in these combinations gives $f_{x,1},\ldots,f_{x,r_x}\in I$ whose germs generate $I_x$. If $I_x=\{0\}$, take the single generator $0$.

Apply \Cref{lem:analytic-membership-neighborhood} to this fixed finite family. It gives a relative open neighborhood $W_x$ of $x$ in $K$ such that
\[
f_{x'}\in(f_{x,1,x'},\ldots,f_{x,r_x,x'})
\qquad\text{for every }f\in I\text{ and every }x'\in W_x.
\]
The same $W_x$ works for every $f\in I$. Thus the selected elements generate $I_{x'}$ at every $x'\in W_x$.

\emph{Step 2: one finite family for the compact set.}
The sets $W_x$ cover $K$. Choose a finite subcover and collect the corresponding generators into $f_1,\ldots,f_r\in I$. For every $x\in K$, Step~1 gives
\[
I_x=(f_{1,x},\ldots,f_{r,x})\subseteq\mathcal O_x.
\]
The reverse inclusion needed for this equality holds because every selected $f_j$ belongs to $I$.

\emph{Step 3: coefficients on a neighborhood of $K$.}
Fix any $f\in I$. By Step~2, $f_x\in(f_{1,x},\ldots,f_{r,x})$ for every $x\in K$. \Cref{lem:analytic-coefficient-gluing} gives $a_1,\ldots,a_r\in\mathcal O(K)$ with $f=\sum_j a_jf_j$. Hence $I\subseteq(f_1,\ldots,f_r)$, and the reverse inclusion follows from the ideal property. This proves that $I$ is finitely generated.

\end{proof}

\subsection{Exact finite-order relation}
\label{app:finite-generation}

Frisch's theorem gives finite generation for scalar analytic functions. We first extend Frisch's theorem to vector-valued maps and prove \Cref{thm:finite-dependence}. Then, we apply \Cref{thm:finite-dependence} to the scaled finite differences and obtain \Cref{lem:finite-gap-relation}.

\printrepeatedstatement{thm-finite-dependence}{\finiteDependenceRestated}

\begin{proof}
Put $R:=\mathcal O(K)$ and $d_V:=\dim V$. By \Cref{thm:frisch-noetherian}, every ideal of $R$ is finitely generated. We first verify the corresponding fact for a submodule $L\subseteq R^{d_V}$, a set of vectors containing zero and closed under addition and multiplication by elements of $R$. Induct on $d_V$. The case $d_V=0$ is immediate, and the case $d_V=1$ is the ideal property. For $d_V\ge2$, projection onto the last coordinate sends $L$ to an ideal of $R$. Choose finitely many generators of that ideal and lift them to vectors $v_1,\ldots,v_r\in L$. The kernel is a submodule of $R^{d_V-1}$, so it has finitely many generators by induction. For any $v\in L$, subtracting a suitable $R$-linear combination of $v_1,\ldots,v_r$ makes its last coordinate zero. The remainder is generated by the kernel generators. These two families therefore generate $L$.

Choose a basis of $V$, and write $[\Phi_k]\in R^{d_V}$ for the germ of $\Phi_k$ along $K$. Consider the submodule
\[
M:=\sum_{k\ge0}R[\Phi_k]\subseteq R^{d_V},
\]
where every sum defining an element of $M$ is finite. By the preceding argument, $M$ has finitely many generators. Each is a finite combination of the $[\Phi_k]$, so there is an $N\ge0$ with $M=\sum_{k=0}^N R[\Phi_k]$. If $M=\{0\}$, take $N=0$. Setting $m:=N+1$ gives
\[
[\Phi_m]=\sum_{k=0}^{m-1}[c_k][\Phi_k]
\qquad\text{for some }[c_0],\ldots,[c_{m-1}]\in R.
\]
Only finitely many maps and coefficient germs enter this equality. Choose their representatives on the intersection of their open neighborhoods of $K$. Equality as germs means that, after a further restriction to an open neighborhood $U\supseteq K$, the equality holds pointwise throughout $U$. This proves \Cref{eq:finite-dependence}. The argument does not require one common neighborhood for the entire infinite sequence.
\end{proof}

The loss table is part of the compact set $K_\star$. This lets us choose the finite relation once for the entire dimension class, before fixing a game or trajectory.

\printrepeatedstatement{lem-finite-gap-relation}{\finiteGapRelationRestated}

\begin{proof}
Fix the player $i$. Finitely many polynomial equalities and weak inequalities define $K_\star=K_0\times\{0\}$, where $K_0=\mathcal R\times\mathcal V\times\mathcal G$. Thus $K_\star$ is compact and semianalytic. By \Cref{lem:scaled-difference-analytic}, each $H_{i,k}$ is analytic on a neighborhood of $K_\star$. Apply \Cref{thm:finite-dependence} to the sequence $(H_{i,k})_{k\ge0}$ to obtain an integer $m\ge1$ and analytic scalar coefficients $c_0,\ldots,c_{m-1}$ such that
\[
H_{i,m}=\sum_{k=0}^{m-1}c_kH_{i,k}
\]
on a common open neighborhood $U$ of $K_\star$. Intersect $U$ with the domains of $G,H_{i,0},\ldots,H_{i,m}$ if needed; there are only finitely many of these domains.

Because $K_0$ is compact and $U$ is open, some $\eta_i>0$ satisfies
\begin{equation}
K_0\times[-\eta_i,\eta_i]\subseteq U.
\label{eq:compact-product-set}
\end{equation}
Continuity on this compact set gives finite constants $A,M$ such that, for every $0\le k<m$, $|c_k|\le A$, and, for every $0\le k\le m$, $\norm{H_{i,k}}_2\le M$. The compact set and analytic maps were fixed by $(n,d)$ and $i$ before any particular loss table was chosen. Hence $m,\eta_i,A,M$ are uniform over all games of these dimensions and do not depend on the horizon.

Fix any such game, $0<\eta\le\eta_i$, and $t\ge1$. Invariance of $\mathcal S$ gives $(s^t,g,\eta)\in K_0\times[-\eta_i,\eta_i]$. Evaluating the relation there and using the trajectory identity from \Cref{eq:scaled-difference-functions} yields
\[
\eta^{-m}\Delta^m h_i^{t+m}
=H_{i,m}(s^t;g,\eta)
=\sum_{k=0}^{m-1}c_k(s^t;g,\eta)
 \eta^{-k}\Delta^kh_i^{t+k}.
\]
Set $c_{k,t}:=c_k(s^t;g,\eta)$ and multiply by $\eta^m$ to obtain \Cref{eq:exact-relation-main}, with $|c_{k,t}|\le A$. For $0\le k\le m$, the same trajectory identity and the bound on $H_{i,k}$ give
\[
\norm{\Delta^kh_i^{t+k}}_2
=\eta^k\norm{H_{i,k}(s^t;g,\eta)}_2
\le M\eta^k,
\]
which is \Cref{eq:scaled-difference-size-main}.
\end{proof}

\section{Finite-support cutoff construction}
\label{app:technical}

To localize the trajectory, we use a gradual cutoff rather than zero-padding. This preserves the $\eta^m$ scale of the recurrence and controls the cost of returning to the original segment.

\printrepeatedstatement{lem-exact-boundary-extension}{\finiteSupportCutoffRestated}

\begin{proof}
For short horizons, we pay for the whole segment directly. Otherwise, we keep a central interval unchanged and taper over $O(1/\eta)$ indices at both ends; a buffer before the first taper avoids the exceptional initial first-order difference. Choose a constant $C_0$ depending only on $m,A,M$. If $T\le C_0/\eta$, take $y^t=0$ for every $t$. The $k=0$ case of \Cref{eq:exact-boundary-assumption-main} gives $\norm{u^t}\le M$. Hence $\norm{\Delta u^1}\le2M$, while the $k=1$ case gives $\norm{\Delta u^t}\le M\eta$ for every $t\ge2$. Therefore
\[
\sum_{t=1}^T\norm{\Delta u^t}^2
\le4M^2+C_0M^2\eta
\le C.
\]
The two norm comparisons follow, and \Cref{eq:exact-norm-recurrence-main} is immediate because $y=0$.

Suppose now that $T>C_0/\eta$. Put $L:=\lceil1/\eta\rceil$. Choose a fixed smooth function $\psi:\R\to[0,1]$ that is zero on $(-\infty,0]$, one on $[1,\infty)$, and has bounded derivatives. Define
\begin{equation}
\chi^t
:=\psi\!\left(\frac{t-1-3m}{L}\right)
  \psi\!\left(\frac{T+1-3m-t}{L}\right),
\label{eq:exact-boundary-weights-app}
\end{equation}
set $y^t:=\chi^tu^t$ for $1\le t\le T+1$, and set $y^t:=0$ otherwise. The construction keeps $y^t=u^t$ on the central interval
\[
1+3m+L\le t\le T+1-3m-L
\]
and is zero before and after the two intervals on which $\chi^t$ changes. Let $B_T$ be the set of starting indices $t$ for which $\{t,\ldots,t+2m\}$ intersects either of these two intervals. Increasing $C_0$ if necessary makes the central interval nonempty and gives
\[
|B_T|\le C(L+m)\le C/\eta.
\]

Repeated use of
\[
\Delta\psi(t/L)=\int_{t/L}^{(t+1)/L}\psi'(s)\,\dd s
\]
and the finite-difference product rule gives $\sup_t|\Delta^\ell\chi^t|\le C_{m,\ell}\eta^\ell$ for $0\le\ell\le m$. Moreover,
\begin{equation}
\Delta^j(\chi u)^t
=\sum_{\ell=0}^j\binom j\ell
 (\Delta^\ell\chi)^t(\Delta^{j-\ell}u)^{t+\ell}.
\label{eq:boundary-product}
\end{equation}
The buffer of length $3m$ before the first transition ensures that every finite difference of $u$ of positive order in \Cref{eq:boundary-product} begins at an index covered by \Cref{eq:exact-boundary-assumption-main}. Thus, whenever $t\in B_T$,
\begin{equation}
\norm{\Delta^j y^{t+r}}
\le C\eta^j,
\qquad 0\le j\le m,
\quad 0\le r\le2m-j.
\label{eq:exact-boundary-local-app}
\end{equation}

We retain the original recurrence wherever the cutoff leaves the trajectory unchanged and collect its failure into a boundary residual. For each $t$, define
\[
\widetilde c_{k,t}:=
\begin{cases}
 c_{k,t},&t\ge1\text{ and }y^{t+r}=u^{t+r}\text{ for every }0\le r\le2m,\\
 0,&\text{otherwise},
\end{cases}
\]
and define $e=(e^t)_{t\in\mathbb Z}$ by
\begin{equation}
\Delta^my^{t+m}
=\sum_{k=0}^{m-1}\widetilde c_{k,t}\eta^{m-k}\Delta^ky^{t+k}+e^t.
\label{eq:constant-perturbed-recurrence-app}
\end{equation}
If $t\notin B_T$, the values $y^t,\ldots,y^{t+2m}$ either agree with the corresponding values of $u$ or are all zero. In the first case \Cref{eq:exact-boundary-recurrence-main} gives $e^t=0$; in the second case this is immediate. Hence $e^t=0$ for every $t\notin B_T$. For $t\in B_T$, \Cref{eq:exact-boundary-local-app} bounds every term in \Cref{eq:constant-perturbed-recurrence-app} by $C\eta^m$. Therefore
\begin{equation}
\norm e_{\ell_2}
\le C\eta^m\sqrt{|B_T|}
\le C\eta^{m-1/2}.
\label{eq:constant-boundary-error-app}
\end{equation}
Taking $\ell_2$ norms in \Cref{eq:constant-perturbed-recurrence-app}, using $|\widetilde c_{k,t}|\le A$, shift invariance, and \Cref{eq:constant-boundary-error-app}, yields \Cref{eq:exact-norm-recurrence-main}.

Finally, $0\le\chi^t\le1$ gives \Cref{eq:cutoff-sequence-norm}. For $1+3m+L\le t\le T-3m-L$, both $t$ and $t+1$ lie in the central interval, so $\Delta y^t=\Delta u^t$. Among $2\le t\le T$, at most $C(L+m)\le C/\eta$ indices lie outside this range, including the last index of the central interval. For these indices, \Cref{eq:exact-boundary-assumption-main} gives $\norm{\Delta u^t}\le M\eta$, so their squared contribution is at most $C\eta$. The initial difference satisfies $\norm{\Delta u^1}\le2M$. These bounds prove \Cref{eq:exact-boundary-return-main}.
\end{proof}

\end{document}